\newtheorem{theorem}{Theorem}
\begin{document}
\title{Delay Violation Probability and Effective Rate of Downlink NOMA over $\alpha$-$\mu$ Fading Channels}

\author{Vaibhav~Kumar,~\IEEEmembership{Student Member,~IEEE,} Barry~Cardiff,~\IEEEmembership{Senior~Member,~IEEE,} \\ Shankar~Prakriya,~\IEEEmembership{Senior~Member,~IEEE,} and~Mark~F.~Flanagan,~\IEEEmembership{Senior Member,~IEEE}
\thanks{Copyright (c) 2015 IEEE. Personal use of this material is permitted. However, permission to use this material for any other purposes must be obtained from the IEEE by sending a request to pubs-permissions@ieee.org.}
\thanks{This publication has emanated from research conducted with the financial support of Science Foundation Ireland (SFI) and is co-funded under the European Regional Development Fund under Grant Number 13/RC/2077.}
\thanks{V. Kumar, B. Cardiff and M. F. Flanagan are with School of Electrical and Electronic Engineering, University College Dublin, Ireland (e-mail: vaibhav.kumar@ieee.org, barry.cardiff@ucd.ie, mark.flanagan@ieee.org).}
\thanks{S. Prakriya is with Department of Electrical Engineering, Indian Institute of Technology -- Delhi, India (email: shankar@ee.iitd.ac.in).}}

\markboth{IEEE Transactions on Vehicular Technology}%
{Kumar \MakeLowercase{\textit{et al.}}: On the Effective Capacity of Downlink NOMA over $\alpha$-$\mu$ Fading Channels}

\maketitle
\let\bs\boldsymbol

\begin{abstract}
Non-orthogonal multiple access (NOMA) is a potential candidate to further enhance the spectrum utilization efficiency in beyond fifth-generation (B5G) standards. However, there has been little attention on the quantification of the delay-limited performance of downlink NOMA systems. In this paper, we analyze the performance of a two-user downlink NOMA system over generalized $\alpha$-$\mu$ fading in terms of delay violation probability and effective rate (ER). In particular, we derive an analytical expression for an upper bound on the delay violation probability and we derive the exact sum ER of the downlink NOMA system. We also derive analytical expressions for high and low-SNR approximations to the sum ER, as well as a fundamental upper bound on the sum ER which represents the ergodic sum-rate for the downlink NOMA system. We also analyze the sum ER of a corresponding time-division-multiplexed orthogonal multiple access (OMA) system. Our results show that while NOMA consistently outperforms OMA over the practical SNR range, the relative gain becomes smaller in more severe fading conditions, and is also smaller in the presence a more strict delay QoS constraint.  
\end{abstract}

\IEEEpeerreviewmaketitle

\section{Introduction}
\IEEEPARstart{N}{OMA} has drawn tremendous attention as a potential solution to the spectrum congestion problem in 5G and B5G wireless communication systems. However, the majority of available literature related to NOMA primarily considers the system performance based on Shannon-perspective metrics such as achievable rate and outage probability, as well as performance optimization using these metrics. In contrast to this, a delay-constrained QoS metric termed as \emph{effective rate}\footnote{This is sometimes also referred to as ``effective capacity'' or ``link-layer capacity''.} (ER) was introduced in~\cite{Negi}, which is defined as the maximum constant arrival rate that a service process can support under a statistical delay constraint. The \emph{delay violation probability} (DVP) is a related metric, which indicates the probability that an information bit arriving at the source will not be successfully delivered to the destination within a certain delay bound. The ER performance of orthogonal multiple access (OMA) based wireless communication systems over different fading channel models has been well-studied in the past, but in the case of NOMA, the literature is mostly limited to analysis over Rayleigh, Rician and Nakagami-$m$ fading channels.

The performance analysis of OMA systems, including single and multiple-antenna architectures over different fading channels in terms of ER, has been presented in~\cite{CR-Nakagami, Unified-MISO, Weibull, EtaMu, AlphaMu, KappaMuShadowed, Composite, Snedecor, VTC, LpNorm}. An optimal rate and power adaptation policy for the maximization of the ER of a secondary/unlicensed user in an average-interference-constrained underlay spectrum sharing system over Nakagami-$m$ fading channels was presented in~\cite{CR-Nakagami}. Using a moment generating function approach, the ER analysis of a single user with independent but not necessarily identically distributed multiple-input single-output (MISO) channel assuming hyper Fox's $H$ as well as generalized-$\mathcal K$ fading was presented in~\cite{Unified-MISO}. Some other notable contributions also exist on the analysis of the ER for a single-user communication system over more flexible/generalized channel models including (MISO) Weibull~\cite{Weibull}, (MISO) $\eta$-$\mu$~\cite{EtaMu}, (MISO) $\alpha$-$\mu$~\cite{AlphaMu}, $\kappa$-$\mu$ shadowed~\cite{KappaMuShadowed}, composite $\alpha$-$\eta$-$\mu$/Gamma~\cite{Composite} and (MISO) Fisher-Snedecor $\mathcal F$ fading~\cite{Snedecor}. Moreover, the ER analysis of a single-user equal gain combining receiver over $\alpha$-$\kappa$-$\mu$ and $\alpha$-$\eta$-$\mu$ fading channels was presented in~\cite{VTC}. Very recently, the ER analysis of a single-user $L_p$-norm diversity combiner with adaptive transmission schemes over arbitrary generalized fading channels was presented in~\cite{LpNorm}.

The impact of finite blocklength coding and \emph{age of information} (AoI) on DVP, as well as the DVP analysis for a multiuser MISO OMA system, were presented in~\cite{ACM_Paper, FiniteBlockLength-DelayViolation, MISO-FiniteBlockLength-DelayViolation, AoI}. Using stochastic network calculus (SNC) and the Mellin transform, the impact of finite blocklength coding on the DVP of a point-to-point single-user communication system over the Rayleigh fading channel was analyzed in~\cite{ACM_Paper}. The optimal design criteria to minimize the DVP in a point-to-point communication system over the Rayleigh fading channel using finite blocklength coding were presented in~\cite{FiniteBlockLength-DelayViolation}. The impact of finite blocklength coding on the DVP in a multiuser MISO downlink OMA system over Rayleigh fading channels with imperfect CSI was presented in~\cite{MISO-FiniteBlockLength-DelayViolation}. The impact of average AoI on the DVP in a two-user time-slotted multiple access Rayleigh fading channel was presented in~\cite{AoI}. In~\cite{LargeWirelessNetworks}, the DVP analysis for a large wireless network over Rayleigh fading channels was presented, where the transmitter locations are modeled using a Poisson point process.

The delay-constrained performance analysis of NOMA systems over conventional (Rayleigh, Rician and Nakagami-$m$) fading channels was presented in~\cite{SubOptimal, QoS_Provisioning_Ding, Half-FullDuplex-ER, Mussavian, IoT-NOMA, NOMA_JSTSP, Poor, NOMA_Secrecy, Uplink-Musavian, FiniteBlockLength-Musavian, UplinkNOMA-FiniteBlockLength}. The analysis of the sum ER for a two-user downlink NOMA system in Rayleigh fading was first presented in~\cite{SubOptimal}, where a suboptimal power allocation policy was proposed for the maximization of the sum ER using a truncated channel inversion power control policy. A cross-layer optimal power allocation scheme for a half-duplex cooperative multi-user downlink NOMA system over Rayleigh fading channels was suggested in~\cite{QoS_Provisioning_Ding}, where the optimization problem was shown to be quasi-concave and a bisection-based solution was proposed in order to maximize the ER of the user having the minimum ER. Building on~\cite{QoS_Provisioning_Ding}, a similar \emph{max-min} criterion based optimal power allocation scheme was suggested for a half-duplex/full-duplex cooperative two-user downlink NOMA system over Rayleigh fading channels in~\cite{Half-FullDuplex-ER}. A comprehensive analysis of the ER for a multiuser downlink NOMA system with a per-user delay QoS requirement over Rayleigh fading was presented in~\cite{Mussavian} and it was shown that the OMA system achieves a higher sum ER in the low signal-to-noise ratio (SNR) regime, whereas the NOMA system prevails in the high-SNR regime. The performance analysis of a two-user uplink NOMA system over Rayleigh fading channels in terms of DVP was presented in~\cite{IoT-NOMA}, where the authors also presented an optimal transmit power allocation scheme. The ER and DVP analysis for  multiuser\footnote{In~\cite{NOMA_JSTSP}, although the authors considered a multiuser scenario, it was assumed that the users are grouped into multiple NOMA \emph{pairs}. The analysis and the results were presented for an arbitrary NOMA pair while the problem of optimal user-pairing was not considered.} downlink NOMA over Nakagami-$m$ and Rician fading using SNC and the Mellin transform was presented in~\cite{NOMA_JSTSP}. In that work, two different optimal power allocation schemes, namely MaxMinEC and MinMaxDVP, were also suggested to maximize the minimum ER and minimize the maximum DVP, respectively. In order to analyze the performance of a two-user downlink NOMA system under statistical QoS constraints in the finite-blocklength regime, and to reduce the complications associated with the ER maximization problem due to the existence of statistical delay-bounded and error-rate-bounded QoS constraints, the concept of ``$\epsilon$-effective capacity'' was introduced in~\cite{Poor}. For the analysis of the delay-constrained secrecy performance of a multiuser downlink NOMA system over Rayleigh fading channels, the notion of \emph{effective secrecy rate} was introduced in~\cite{NOMA_Secrecy}. A closed-form expression for the sum ER of a two-user uplink NOMA system over Rayleigh fading was presented in~\cite{Uplink-Musavian}. A closed-form expression for the sum ER in a multiuser\footnote{In~\cite{FiniteBlockLength-Musavian}, the authors considered a multiuser system consisting of multiple NOMA pairs, and showed that the NOMA pairs with more distinct channel conditions attain a higher sum ER.} downlink NOMA system in the finite-blocklength regime over Rayleigh fading channels was presented in~\cite{FiniteBlockLength-Musavian}. The analysis of DVP for a two-user uplink NOMA system using finite blocklength coding over Rayleigh fading channels, considering imperfect channel state information (CSI). was presented in~\cite{UplinkNOMA-FiniteBlockLength}.

To the best of our knowledge, the analysis of DVP and ER for NOMA systems in generalized fading channels have not yet been presented in the literature. In this paper, we analyze the performance of a two-user downlink NOMA system in terms of the DVP and the sum ER over $\alpha$-$\mu$ fading. The $\alpha$-$\mu$ distribution is a mathematically tractable and generalized distribution which provides a remarkable unification of the system's performance over a variety of fast fading channel models including one-sided Gaussian, Chi-squared, Rayleigh, Nakagami-$m$, Erlang and Weibull fading. 
Against this background, the main contributions of this paper are listed below:
\begin{itemize}
	\item We analyze the DVP of a two-user downlink NOMA system over $\alpha$-$\mu$ fading channels. Using SNC and the Mellin transform, we derive a closed-form upper bound on the DVP for each user, and use these expressions to demonstrate the effect of different channel parameters on the DVP.
	\item We also derive exact closed-form expressions for the ER of each user in the downlink NOMA system\footnote{Note that this is different from~\cite{NOMA_JSTSP}, where approximate expressions were only derived for the ER of the \emph{weak} user over Nakagami-$m$ and Rician fading.} as well as a corresponding downlink OMA system over $\alpha$-$\mu$ channels, and show the effect of non-linearity, clustering, and delay parameters on the difference between the sum ER in NOMA and OMA systems. Furthermore, we also derive approximate expressions for the sum ER of downlink NOMA in the high-SNR and low-SNR\footnote{Note that for low-SNR approximation in~\cite{NOMA_JSTSP}, a first-order Taylor approximation of the form $\ln(1 + x)\approx x; x \to 0$ was used. In this paper, we use a second-order Taylor approximation, i.e., $f(x) \approx x f'(0) + 0.5 x^2 f''(0); \ x \to 0$, which helps in quantifying the minimum energy per bit required for reliable transmission of information and the wideband slope.} regimes.
	\item Moreover, using Jensen's inequality, we derive an upper-bound on the sum ER of downlink NOMA which is independent of the delay constraint, which represents the ergodic sum-rate of the NOMA system. By evaluating the difference between the ergodic sum-rate and sum ER (which we term as the \emph{rate loss}) in the downlink NOMA system, we demonstrate that the rate loss increases with an increase in the SNR, the delay exponent and the severity of fading.
\end{itemize}
\section{System Model}	
We consider a half-duplex downlink NOMA scenario, where a source $S$ simultaneously communicates with two users $U_s$ (the user with \emph{strong} average channel condition) and $U_w$ (the user with \emph{weak} average channel condition). The channel fading coefficient of the $S$-$U_i$ link ($i \in \{s, w\}$) is denoted by $h_i$, and the corresponding channel gain is denoted by $g_i$ ($\triangleq |h_i|^2$) with $\alpha$-root-mean value denoted by $\Omega_i$. It is assumed that both wireless links are $\alpha$-$\mu$ distributed with \emph{non-linearity} parameter denoted by $\alpha$ and \emph{clustering} parameter denoted by $\mu$~\cite{Yacoub}. Assuming $\alpha, \mu \in \mathbb Z_+$ (the set of all positive integers), and using~\cite[eqns.~(1),~(2)]{ICC} and~\cite[eqn.~(8.352-1),~p.~899]{Grad}, the cumulative distribution function (CDF) and probability density function (PDF) of the channel gain $g_i$ are, respectively, given by 
\begin{align}
F_{g_i}(x) = 1 - \exp \left( - \frac{\mu x^{0.5 \alpha}}{\Omega_i^{\alpha}}\right) \sum_{j = 0}^{\mu - 1} \frac{ \mu^j x^{0.5 j \alpha}}{j! \Omega_i^{j \alpha}}, \label{F_gi}
\end{align}
\begin{align}
f_{g_i}(x) = \dfrac{ \alpha \mu^{\mu} x^{0.5 \alpha \mu - 1} }{ 2 \Omega_i^{\alpha \mu} \Gamma(\mu) } \exp \left( -\frac{ \mu x^{0.5 \alpha} }{\Omega_i^{\alpha} }\right), \label{f_gi}
\end{align}
where $\Gamma(\cdot)$ denotes the Gamma function. Furthermore, we assume that $\Omega_w^{\alpha} < \Omega_s^{\alpha}$. We also assume that (perfect) \emph{instantaneous} CSI is available at $U_i$ regarding the $S$-$U_i$ link, while only the \emph{statistical} CSI (comprising the values of $\Omega_s$ and $\Omega_w$) is available at $S$.

In the case of downlink NOMA, the source $S$ transmits a superimposed symbol $\sum_i \sqrt{a_i \mathcal E} x_i$ to both users, where $\mathcal E$ denotes the energy budget per (superposed) symbol at the source, $a_i$ denotes the power-allocation coefficient for user $U_i$ with $a_s < a_w$ and $\sum_i a_i = 1$ , and $x_i$ represents the data-bearing (complex) constellation symbol intended for user $U_i$ with unit energy. The received signal at user $U_i$ is given by $y_i = h_i (\sqrt{a_s \mathcal E} x_s + \sqrt{a_w \mathcal E} x_w) + z_i$, where $z_i \sim \mathcal{CN}(0, \sigma^2)$ denotes complex additive white Gaussian noise (AWGN). Upon reception of the signal, $U_w$ decodes $x_w$ treating the interference due to the presence of $x_s$ as additional noise. On the other hand, $U_s$ also decodes $x_w$ first (by treating the interference due to $x_s$ as additional noise) and then applies successive interference cancellation (SIC) to remove $x_w$ from $y_s$, and subsequently decodes $x_s$. Therefore, the instantaneous received signal-to-interference-plus-noise ratio (SINR) and SNR to decode $x_w$ and $x_s$ are, respectively, given by $\gamma_w = a_w \rho g_{\min}/(a_s \rho g_{\min} + 1)$ and $\gamma_s = a_s \rho g_s$, where $g_{\min} \triangleq \min\{g_w, g_s\}$ and $\rho \triangleq \mathcal E/\sigma^2$ (we will refer to $\rho$ as the SNR in the remainder of this paper). 

Using~\eqref{F_gi},~\eqref{f_gi} and a transformation of random variables, the PDF of $g_{\min}$ can be given by 
\begin{multline} \label{f_gmin}
\!\!\!\! f_{g_{\min}}(x) = \frac{\alpha \exp \left( -\mu x^{0.5 \alpha}/\tilde \Omega \right)}{2 \Omega_s^{\alpha \mu} \Gamma(\mu)}  \sum_{m = 0}^{\mu - 1} \frac{\mu^{(\mu + m)} x^{0.5 \alpha(\mu + m) - 1}}{m! \Omega_w^{m \alpha}} \\
 + \dfrac{\alpha \exp \left( -\mu x^{0.5 \alpha}/\tilde \Omega \right)}{2 \Omega_w^{\alpha \mu} \Gamma(\mu)} \sum_{n = 0}^{\mu - 1} \dfrac{\mu^{(\mu + n)} x^{0.5 \alpha(\mu + n) - 1}}{n! \Omega_s^{n \alpha}}, 
\end{multline}
where $\tilde \Omega = 1/\left( \tfrac{1}{\Omega_s^{\alpha}} + \tfrac{1}{\Omega_w^{\alpha}}\right)$.
\section{Performance analysis}
\subsection{Delay violation probability in downlink NOMA}
Assuming that the instantaneous traffic arrival for user $U_i \in \{s, w\}$ at $S$ in the $k$-th time slot is denoted by $\lambda_i(k)$ and the corresponding traffic departure is denoted by $d_i(k)$, the cumulative arrival and departure processes from time slot $\tau$ to $t-1$ can be represented by $A_i(\tau, t) = \sum_{k = \tau}^{t-1} \lambda_i(k)$ and $D_i(\tau, t) = \sum_{k = \tau}^{t-1}d_i(k)$, respectively. Therefore, the delay experienced in the successful delivery of the information bits (intended for user $U_i$) arriving at $S$ in time slot $t$ is given by 
\begin{align*}
\vartheta_i(t) \triangleq \inf \{u \geq 0: A_i(0, t) \leq D_i(0, t + u)\}.
\end{align*}
Using~\cite[eqn.~(17)]{NOMA_JSTSP}, an upper bound on the probability of violation of a target delay $\vartheta$ is given by 
\begin{align}
\Pr(\vartheta_i > \vartheta) \leq \inf_{\mathscr S > 0} \left\{\dfrac{[\mathcal M_{\varphi_i} (1 - \mathscr S)]^{\vartheta}}{1 - \exp(\lambda \mathscr S) \mathcal M_{\varphi_i} (1 - \mathscr S)}\right\}, \label{DelayProbBound}
\end{align}
where $\mathcal M_X(\mathscr S, \tau, t) = \mathbb E\{[X(\tau, t)]^{\mathscr S - 1}\}$ represents the Mellin transform of the random process $X(\tau, t)$, $\varphi_s \triangleq (1 + a_s \rho g_s)^{N/\ln 2}$, $\varphi_w \triangleq [1 + a_w \rho g_{\min}/ (1 + a_s \rho g_{\min})]^{N/\ln 2}$ and $N$ denotes the number of symbols transmitted by $S$ in each time slot. 
\begin{theorem} \label{Theorem_Mellin_s}
	For the case of $U_s$, an analytical expression for $\mathcal M_{\varphi_s} (1 - \mathscr S)$ can be given by
{\color{black}
	\begin{multline} \label{Mellin_s}
		\mathcal M_{\varphi_s}(1 - \mathscr S) \\ = \dfrac{\alpha^{\varpi} \mu^{\mu} G_{\alpha, 2 + \alpha}^{2 + \alpha, \alpha} \left[ \left. \tfrac{\mu^2}{4 (a_s \rho)^{\alpha} \Omega_s^{2\alpha}} \right\vert \begin{smallmatrix} \Delta(\alpha, 1 - 0.5 \alpha \mu) \\[0.1em] \Delta(2, 0), \ \Delta (\alpha, \varpi - 0.5 \alpha \mu) \end{smallmatrix} \right]}{ \sqrt{2} (2\pi)^{\alpha - 0.5} \Omega_s^{\alpha \mu} \Gamma(\mu) \Gamma(\varpi) (a_s \rho)^{0.5 \alpha \mu}},
	\end{multline}}\\
where $\varpi \triangleq N \mathscr S/\ln 2$, $\Delta(x, y) = \tfrac{y}{x}, \tfrac{y + 1}{x}, \ldots, \tfrac{y + x - 1}{x}$, and $G[\cdot]$ denotes Meijer's G-function.
\end{theorem}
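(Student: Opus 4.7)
The plan is to evaluate $\mathcal{M}_{\varphi_s}(1-\mathscr{S}) = \mathbb{E}\{(1+a_s\rho g_s)^{-\varpi}\}$ via a Mellin--Barnes representation. First, I substitute the PDF \eqref{f_gi} with $i=s$ to write
\begin{equation*}
\mathcal{M}_{\varphi_s}(1-\mathscr{S}) = \frac{\alpha\mu^\mu}{2\Omega_s^{\alpha\mu}\Gamma(\mu)}\int_0^\infty (1+a_s\rho x)^{-\varpi}\, x^{\alpha\mu/2-1}\exp\!\left(-\frac{\mu x^{\alpha/2}}{\Omega_s^\alpha}\right)dx.
\end{equation*}
Replacing $(1+a_s\rho x)^{-\varpi}$ by its Mellin--Barnes representation $(1+y)^{-\varpi}=\frac{1}{\Gamma(\varpi)}\cdot\frac{1}{2\pi i}\int_L\Gamma(\varpi+s)\Gamma(-s)\,y^{s}\,ds$ and exchanging the order of integration, the inner integral over $x$ can be evaluated by the substitution $u=\mu x^{\alpha/2}/\Omega_s^\alpha$, collapsing to $(2/\alpha)(\Omega_s^\alpha/\mu)^{\mu+2s/\alpha}\Gamma(\mu+2s/\alpha)$. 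After simplification, this leaves a contour integral with kernel $\Gamma(\varpi+s)\Gamma(-s)\Gamma(\mu+2s/\alpha)$ times $[a_s\rho\,\Omega_s^{2}/\mu^{2/\alpha}]^{s}$.

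The central step is to recast this contour integral as a Meijer G-function, and the obstacle is that $\Gamma(\mu+2s/\alpha)$ has slope $2/\alpha$ in $s$, whereas the canonical Meijer form requires unit-slope Gammas. I resolve this by first substituting $s=-\alpha u$, which converts the algebraic factor to $[\mu^{2}/(4(a_s\rho)^\alpha\Omega_s^{2\alpha})]^{u}$---precisely the argument appearing in \eqref{Mellin_s}---and then applying the Gauss multiplication formula $\Gamma(nz) = (2\pi)^{(1-n)/2}\,n^{nz-1/2}\prod_{k=0}^{n-1}\Gamma(z+k/n)$ with $n=\alpha$ to both $\Gamma(\varpi-\alpha u)$ and $\Gamma(\alpha u)$, and with $n=2$ (Legendre duplication) to $\Gamma(\mu-2u)$. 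Each Gamma thus splits into unit-slope factors, and the resulting integrand is the Mellin--Barnes kernel of a $G_{\alpha,2+\alpha}^{2+\alpha,\alpha}$-function with top parameters $\Delta(\alpha,1)$ and bottom parameters $\Delta(2,\mu)\cup\Delta(\alpha,\varpi)$.

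Finally, I apply the parameter-shift identity $G_{p,q}^{m,n}[z\,|\,\{a_j-\sigma\};\{b_j-\sigma\}] = z^{-\sigma}\,G_{p,q}^{m,n}[z\,|\,\{a_j\};\{b_j\}]$ with $\sigma=\mu/2$ to convert the parameters into the compact form $\Delta(\alpha,1-\alpha\mu/2)$, $\Delta(2,0)$, $\Delta(\alpha,\varpi-\alpha\mu/2)$ stated in \eqref{Mellin_s}; this transformation contributes the factor $[\mu^{2}/(4(a_s\rho)^\alpha\Omega_s^{2\alpha})]^{\mu/2} = \mu^\mu/[2^\mu(a_s\rho)^{\alpha\mu/2}\Omega_s^{\alpha\mu}]$. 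The principal book-keeping difficulty is verifying that the constants generated by the three Gauss/Legendre applications (namely $\alpha^{\varpi-1}$, $2^{\mu-1/2}$, and $(2\pi)^{1/2-\alpha}$), together with the shift factor and the original prefactor $\alpha\mu^\mu/[2\Omega_s^{\alpha\mu}\Gamma(\mu)\Gamma(\varpi)]$, collapse exactly to $\alpha^{\varpi}\mu^\mu/[\sqrt{2}(2\pi)^{\alpha-0.5}\Omega_s^{\alpha\mu}\Gamma(\mu)\Gamma(\varpi)(a_s\rho)^{\alpha\mu/2}]$, matching the prefactor in \eqref{Mellin_s}.
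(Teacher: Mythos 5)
Your proof is correct and reaches \eqref{Mellin_s} with the right parameters and the right prefactor, but it takes a more self-contained route than the paper. The paper's Appendix~A never writes a contour integral: it expresses both $(1+a_s\rho x)^{-\varpi}$ and $\exp(-\mu x^{0.5\alpha}/\Omega_s^{\alpha})$ as Meijer G-functions via \cite[eqns.~(10),~(11)]{Reduce} and then invokes the ready-made Adamchik--Marichev integral \cite[eqn.~(21)]{Reduce} for the product of two G-functions whose arguments are proportional to $x$ and $x^{l/k}$; that table formula internally performs exactly the Mellin--Barnes expansion and Gauss-multiplication splitting that you carry out by hand. Your version makes explicit where every $\Delta(\cdot,\cdot)$ block and every constant ($\alpha^{\varpi}$, $2^{\mu-1/2}$, $(2\pi)^{1/2-\alpha}$, the shift factor $z^{\mu/2}$) originates, which is a genuine pedagogical gain and removes reliance on the correct instantiation of a cited identity; the paper's version is shorter and less error-prone in the bookkeeping precisely because the bookkeeping is delegated to the reference. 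Two small imprecisions in your write-up, neither fatal: the factor $4$ in the G-function argument does not come from the substitution $s=-\alpha u$ (which alone yields $[\mu^{2}/((a_s\rho)^{\alpha}\Omega_s^{2\alpha})]^{u}$) but from the $2^{-2u}$ term of the Legendre duplication applied to $\Gamma(\mu-2u)$; and you should account explicitly for the Jacobian factor $\alpha$ from $\mathrm ds=-\alpha\,\mathrm du$ and for the cancellation of the original $\alpha/2$ against the $2/\alpha$ produced by the inner integral, since otherwise the exponent of $\alpha$ in the prefactor only works out by accident. With those two clarifications your constants collapse exactly as claimed.
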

\begin{proof}
	See Appendix~\ref{Appendix_Theorem_Mellin_s}.
\end{proof} 
Substituting the expression for $\mathcal M_{\varphi_s}(1 - \mathscr S)$ into~\eqref{DelayProbBound}, one can obtain an analytical expression for an upper bound on the DVP for $U_s$.
\begin{theorem} \label{Theorem_Mellin_w}
For the case of the weak user, an analytical expression for $\mathcal M_{\varphi_w}(1 - \mathscr S)$ can be given by 
\begin{multline} \label{Mellin_w}
	\mathcal M_{\varphi_w}(1 - \mathscr S) = \dfrac{1} {  \Gamma(\mu) \Gamma(\varpi) \Gamma(-\varpi) } \left[ \dfrac{1}{\Omega_s^{\alpha \mu}} \sum_{m = 0}^{\mu - 1} \dfrac{{\tilde \Omega}^{\mu + m}}{m! \Omega_w^{m \alpha}} \right.  \\
	\left. \times \mathcal H (\varpi, m) + \dfrac{1} { \Omega_w^{\alpha \mu} } \sum_{n = 0}^{\mu - 1} \dfrac{{\tilde \Omega}^{\mu + n}}{n! \Omega_s^{n \alpha}} \mathcal H (\varpi, n) \right],
\end{multline}
where $\mathcal H(x, y)$ denotes the extended generalized bivariate Fox's H-function (EGBFHF)\footnote{Computationally efficient implementations of EGBFHF using \textsc{Mathematica} and \textsc{Matlab} are given in~\cite{Secrecy} and~\cite{MatlabImplement}, respectively.}, i.e., 
\begin{multline*}
\!\!\!\!\!\mathcal H(x, y) = H_{\bs a}^{\bs b} \left[ \!\! \left. \begin{smallmatrix} \left( 1-(\mu + y); \tfrac{2}{\alpha}, \tfrac{2}{\alpha} \right)\\[0.6em] -\end{smallmatrix}\right\vert \left. \begin{smallmatrix} (1-x, 1)\\[0.6em] (0, 1)\end{smallmatrix} \right\vert \left. \begin{smallmatrix} (1+x, 1)\\[0.6em] (0, 1)\end{smallmatrix} \right \vert \right. \\
\left. \rho \left( \dfrac{\tilde \Omega}{\mu} \right)^{2/\alpha}, a_s \rho \left( \dfrac{\tilde \Omega}{\mu} \right)^{2/\alpha} \right], 
\end{multline*}
$\bs a = 1, 0:1, 1:1, 1$ and $\bs b = 0, 1:1, 1:1, 1$.
\end{theorem}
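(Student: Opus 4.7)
The plan is to start from the definition $\mathcal M_{\varphi_w}(1-\mathscr S) = \mathbb{E}[\varphi_w^{-\mathscr S}]$ and exploit $a_s + a_w = 1$ to rewrite $1 + a_w \rho g_{\min}/(1+a_s\rho g_{\min}) = (1+\rho g_{\min})/(1+a_s\rho g_{\min})$, so that
\begin{align*}
\mathcal M_{\varphi_w}(1-\mathscr S) = \int_0^\infty (1+a_s\rho x)^{\varpi}(1+\rho x)^{-\varpi}\, f_{g_{\min}}(x)\, dx.
\end{align*}
The product of a positive and a negative power of binomials naturally calls for a double Mellin--Barnes representation, which is the route that will produce an EGBFHF.

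The central step is to express each binomial factor through its Mellin--Barnes contour integral,
\begin{align*}
(1+z)^{-\varpi} &= \frac{1}{\Gamma(\varpi)}\,\frac{1}{2\pi\mathrm{j}}\int_{\mathcal L_1} \Gamma(s_1)\Gamma(\varpi - s_1)\, z^{-s_1}\, ds_1, \\
(1+z)^{\varpi} &= \frac{1}{\Gamma(-\varpi)}\,\frac{1}{2\pi\mathrm{j}}\int_{\mathcal L_2} \Gamma(s_2)\Gamma(-\varpi - s_2)\, z^{-s_2}\, ds_2,
\end{align*}
the latter being valid by analytic continuation away from integer $\varpi$. This immediately accounts for the $1/[\Gamma(\varpi)\Gamma(-\varpi)]$ prefactor appearing in the claim. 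Substituting \eqref{f_gmin} for $f_{g_{\min}}(x)$ splits the expectation into two finite sums indexed by $m$ and $n$. Invoking Fubini's theorem (justified by the standard exponential decay of $|\Gamma(\cdot)|$ along vertical contours), I would then exchange the order of integration; the innermost radial integral over $x$ collapses, via the substitution $t = \mu x^{\alpha/2}/\tilde\Omega$, to a single Gamma function whose argument is linear in $s_1 + s_2$, accompanied by the prefactor $\tilde\Omega^{\mu+m}/\mu^{\mu+m}$ and by the two Mellin variables $\rho(\tilde\Omega/\mu)^{2/\alpha}$ and $a_s\rho(\tilde\Omega/\mu)^{2/\alpha}$ raised to powers $-s_1$ and $-s_2$ respectively.

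The final step is pure pattern-matching: the two outer contour integrals in $s_1, s_2$, carrying the product $\Gamma(s_1)\Gamma(\varpi-s_1)\Gamma(s_2)\Gamma(-\varpi-s_2)\Gamma\!\left(\mu + m - (2/\alpha)(s_1+s_2)\right)$ together with the two exponential variables, align exactly with the defining double-contour representation of the EGBFHF $\mathcal H(\varpi, m)$ for the parameter vectors $\bs a = 1, 0:1,1:1,1$ and $\bs b = 0,1:1,1:1,1$; summing over $m$, and repeating the argument for the $n$-indexed family from the second summand of $f_{g_{\min}}$, yields~\eqref{Mellin_w}. I expect the principal difficulty to lie in this bookkeeping---reconciling the factor $2/\alpha$ with the first-argument entries of $\mathcal H$ and confirming that the ordering of Gamma-function arguments matches the EGBFHF convention of~\cite{Secrecy, MatlabImplement}---together with verifying that the two Mellin--Barnes contours can be chosen so as to secure simultaneous absolute convergence of both integrals.
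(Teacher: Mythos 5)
Your proposal follows essentially the same route as the paper's proof: the paper likewise uses $a_s+a_w=1$ to rewrite the integrand as $(1+\rho x)^{-\varpi}(1+a_s\rho x)^{\varpi}f_{g_{\min}}(x)$, substitutes the exact $f_{g_{\min}}$, represents the two binomial factors via $G_{1,1}^{1,1}$ (i.e., exactly your Mellin--Barnes integrals, which is where the $1/[\Gamma(\varpi)\Gamma(-\varpi)]$ prefactor originates), converts to Fox $H$-functions, substitutes $\tilde x = x^{0.5\alpha}$, and identifies the result as the EGBFHF. The only difference is presentational: the paper closes the argument by citing a tabulated integral for a product of three $H$-functions (\cite{Mittal}, eqn.~(2.3)) rather than performing the double-contour bookkeeping explicitly, so your plan amounts to re-deriving that identity inline and is sound.
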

\begin{proof}
	See Appendix~\ref{Appendix_Theorem_Mellin_w}.
\end{proof}
Substituting the expression for $\mathcal M_{\varphi_w}(1 - \mathscr S)$ into~\eqref{DelayProbBound}, one can obtain an analytical expression for an upper bound on the DVP for $U_w$. It is important to note that in~\cite{NOMA_JSTSP}, an approximate expression for $f_{g_{\min}}(x)$ was used to obtain the analytical expression for $\mathcal M_{\varphi_w}(1 - \mathscr S)$ (over Nakagami-$m$ and Rician fading); in contrast, we use the \emph{exact} expression for $f_{g_{\min}}(x)$ to find the closed-form expression for $\mathcal M_{\varphi_w}(1 - \mathscr S)$ over $\alpha$-$\mu$ fading.
\subsection{Effective rate of downlink NOMA}
The ER for $U_i$ can be defined as~(c.f.~\cite{Mussavian})
	\begin{align}
		R_{i, \mathrm{NOMA}} = -\dfrac{1}{\nu} \log_2 \left[ \mathbb E_{\gamma_i}\{ (1 + \gamma_i)^{-\nu}\} \right], \label{Ei_def}
	\end{align}
where $\nu \triangleq \theta T B / \ln 2$, $\theta$ represents the delay QoS exponent, $T$ denotes the length of each fading block (which is assumed to be same for both $S$-$U_s$ and $S$-$U_w$ links and is an integer multiple of the symbol duration, with a further assumption that the durations of both user symbols are the same), and $B$ denotes the total bandwidth. 

Therefore, using~\eqref{Ei_def}, the ER for $U_s$ is given by 
\begin{align*}
R_{s, \mathrm{NOMA}} = \dfrac{-1}{\nu} \log_2 \left[ \int_0^\infty (1 + a_s \rho x)^{-\nu} f_{g_s}(x) \mathrm dx \right].
\end{align*}
Substituting the expression for $f_{g_s}(x)$ from~\eqref{f_gi} into the preceding equation and following the arguments in~Appendix~\ref{Appendix_Theorem_Mellin_s}, an analytical expression for the ER of $U_s$ is given by
	\begin{multline} \label{Es_exact_Closed}
		R_{s, \mathrm{NOMA}} \\
		=  -\dfrac{1}{\nu} \log_2 \left\{ \dfrac{\alpha^{\nu} \mu^{\mu} G_{\alpha, 2 + \alpha}^{2 + \alpha, \alpha} \left[ \left. \tfrac{\mu^2}{4 (a_s \rho)^{\alpha} \Omega_s^{2\alpha}} \right\vert \begin{smallmatrix} \chi \\[0.1em] \Delta(2, 0), \ \xi \end{smallmatrix} \right]}{ \sqrt{2} (2\pi)^{\alpha - 0.5} \Omega_s^{\alpha \mu} \Gamma(\mu) \Gamma(\nu) (a_s \rho)^{0.5 \alpha \mu}}  \right\}, 
	\end{multline}
where $\chi = \Delta(\alpha, 1 - 0.5 \alpha \mu)$ and $\xi = \Delta(\alpha, \nu - 0.5 \alpha \mu)$. On the other hand, using~\eqref{Ei_def}, the ER of $x_w$ can be given by 
\begin{multline*}
R_{w, \mathrm{NOMA}} \\
= \dfrac{-1}{\nu} \log_2 \left[\int_0^\infty (1 + \rho x)^{-\nu} (1 + a_s \rho x)^{\nu} f_{g_{\min}}(x) \mathrm dx \right].
\end{multline*}
Substituting the expression for $f_{g_{\min}}(x)$ from~\eqref{f_gmin} in the preceding expression and following the arguments in~Appendix~\ref{Appendix_Theorem_Mellin_w}, an analytical expression for the ER of $U_w$ is given by 
	\begin{multline}
		R_{w, \mathrm{NOMA}} = \dfrac{-1}{\nu} \log_2 \left[ \dfrac{1} { \Gamma(\mu) \Gamma(\nu) \Gamma(-\nu) } \left( \dfrac{1}{\Omega_s^{\alpha \mu} } \right. \right. \\
		\!\!\!\! \times \left. \left. \sum_{m = 0}^{\mu - 1} \dfrac{{\tilde \Omega}^{\mu + m}}{m! \Omega_w^{m \alpha}} \mathcal H (\nu, m)  + \dfrac{1} { \Omega_w^{\alpha \mu} } \sum_{n = 0}^{\mu - 1} \dfrac{{\tilde \Omega}^{\mu + n}}{n! \Omega_s^{n \alpha}} \mathcal H (\nu, n)\right) \right]. \label{Ew_exact_Closed}
	\end{multline}
Therefore, the sum ER of the downlink NOMA system is given by adding~\eqref{Es_exact_Closed} and~\eqref{Ew_exact_Closed}, i.e., $R_{\mathrm{sum, NOMA}} = R_{s, \mathrm{NOMA}} + R_{w, \mathrm{NOMA}}$. Similar to the case of DVP, an approximate expression for $f_{g_{\min}}(x)$ was used to find the analytical expressions for the ER of the weak user over Nakagami-$m$ and Rician fading in~\cite{NOMA_JSTSP}. In contrast, we have used the \emph{exact} expression for $f_{g_{\min}}(x)$ to derive the closed-form expression for $R_{w, \mathrm{NOMA}}$ over $\alpha$-$\mu$ fading channels.
\subsection{High-SNR approximation for the ER of downlink NOMA}
Although the analytical expressions for the ER derived in the previous subsection are exact, they are computationally expensive. Therefore, we derive a high-SNR ($\rho \gg 1$) approximation for the ER of downlink NOMA in this subsection. For the case when $\rho \gg 1$, the ER of $U_s$ can be approximated as 
\begin{align*}
R_{s, \mathrm{NOMA}} \approx & \ - \dfrac{1}{\nu} \log_2 \left[ \mathbb E_{g_s}\left\{ (a_s \rho g_s)^{-\nu} \right\}\right] \\
= & \ \log_2(a_s \rho) - \dfrac{1}{\nu} \log_2 \left[ \int_0^\infty x^{-\nu} f_{g_s}(x) \mathrm dx\right]. 
\end{align*}
Substituting the expression for $f_{g_s}(x)$ from~\eqref{f_gi} into the preceding expression and solving the integral using~\cite[eqn.~(3.326-2),~p.~377]{Grad}, the preceding approximation can be written in closed-form as
	\begin{align}
		R_{s, \mathrm{NOMA}} \approx \log_2(a_s \rho) - \dfrac{1}{\nu} \log_2 \left[ \left( \dfrac{\mu^{1/\alpha}}{\Omega_s} \right)^{2\nu} \dfrac{\Gamma\left( \mu - \tfrac{2\nu}{\alpha}\right)}{\Gamma(\mu)}  \right], \label{Es_high}
	\end{align}
where the integral holds good only for the case when $\alpha \mu > 2 \nu$. On the other hand, it is straightforward to show that for $\rho \gg 1$, the ER of $U_w$ can be approximated as
	\begin{align}
		R_{w, \mathrm{NOMA}} \approx \log_2\left( 1 + \dfrac{a_w}{a_s}\right). \label{Ew_high}
	\end{align}
Note that the high-SNR approximation of $R_{w, \mathrm{NOMA}}$ does not depend on the QoS exponent $\theta$ or on the channel parameters ($\alpha$ and $\mu$). A high-SNR approximation for $R_{\mathrm{sum, NOMA}}$ can be obtained by adding~\eqref{Es_high} and~\eqref{Ew_high}. It can also be shown that the high-SNR approximation of the ER of $U_w$ is equal to the high-SNR approximation of the ergodic rate of $U_w$, which means that at high SNR, the effective rate of the weak user becomes (almost) equal to the ergodic rate, regardless of the delay QoS exponent.
\subsection{Low-SNR approximation for the ER of downlink NOMA}
Using a second-order Taylor expansion around $\rho \to 0$, an approximation for the ER of $U_i$ is given by 
	\begin{align}
		R_{i, \mathrm{NOMA}} = \rho  \dot R_{i, \mathrm{NOMA}} + 0.5 \rho^2  \ddot R_{i, \mathrm{NOMA}} + O(\rho^2), \label{Ei_LowSNR}
	\end{align}
where $\dot R_{i, \mathrm{NOMA}}$ and $\ddot R_{i, \mathrm{NOMA}}$ represent the first-order and second-order derivative of $R_{i, \mathrm{NOMA}}$ (shown in~\eqref{Ei_def}) with respect to $\rho$ and evaluated at $\rho \to 0$, respectively. 
\begin{theorem} \label{Theorem_Derivative}
The derivatives for $U_s$ and $U_w$ are given by 
\begin{equation}
	\begin{aligned}
		& \dot R_{s, \mathrm{NOMA}} = \log_2(e) a_s \mathbb E\{g_s\}, \\
		& \ddot R_{s, \mathrm{NOMA}} = \log_2(e) a_s^2 [\nu (\mathbb E\{g_s\})^2 - (\nu +1) \mathbb E\{g_s^2\}], \\ 
		& \dot R_{w, \mathrm{NOMA}} = \log_2(e) a_w \mathbb E\{g_{\min}\}, \\
		& \ddot R_{w, \mathrm{NOMA}} = \log_2(e) a_w \left[ \nu a_w (\mathbb E\{g_{\min}\})^2  \right. \\
		& \hspace{3cm}\left. - \{(\nu + 1) a_w + 2 a_s \} \mathbb E\{g_{\min}^2\}\right]. 
	\end{aligned} \label{Derivative}
\end{equation}
\end{theorem}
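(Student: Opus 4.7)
The plan is to work directly with the defining integral representation in~\eqref{Ei_def} and carry out a Taylor expansion in $\rho$ around $\rho=0$, interchanging differentiation with the expectation by an appeal to dominated convergence (the relevant moments $\mathbb E\{g_s\}$, $\mathbb E\{g_s^2\}$, $\mathbb E\{g_{\min}\}$, $\mathbb E\{g_{\min}^2\}$ exist for $\alpha$-$\mu$ fading). Define
\begin{align*}
\Psi_i(\rho) \triangleq \mathbb E_{\gamma_i}\{(1+\gamma_i)^{-\nu}\},
\end{align*}
so that $R_{i,\mathrm{NOMA}}(\rho) = -(\log_2 e/\nu)\ln\Psi_i(\rho)$. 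Observing that $\Psi_i(0)=1$, the chain rule gives
\begin{align*}
\dot R_{i,\mathrm{NOMA}} = -\tfrac{\log_2 e}{\nu}\,\Psi_i'(0), \qquad \ddot R_{i,\mathrm{NOMA}} = -\tfrac{\log_2 e}{\nu}\bigl[\Psi_i''(0) - (\Psi_i'(0))^2\bigr].
\end{align*}
So the task reduces to computing $\Psi_i'(0)$ and $\Psi_i''(0)$ for each user.

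For $U_s$, since $\gamma_s = a_s\rho g_s$, the integrand $(1+a_s\rho g_s)^{-\nu}$ differentiates cleanly. Moving derivatives under the expectation yields $\Psi_s'(0) = -\nu a_s\mathbb E\{g_s\}$ and $\Psi_s''(0) = \nu(\nu+1)a_s^2\mathbb E\{g_s^2\}$. Substituting into the chain-rule formulas above and collecting terms gives the stated expressions for $\dot R_{s,\mathrm{NOMA}}$ and $\ddot R_{s,\mathrm{NOMA}}$.

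For $U_w$, I would first rewrite $1+\gamma_w = (1+\rho g_{\min})/(1+a_s\rho g_{\min})$, so that
\begin{align*}
\Psi_w(\rho) = \mathbb E\bigl\{(1+\rho g_{\min})^{-\nu}(1+a_s\rho g_{\min})^{\nu}\bigr\}.
\end{align*}
The most efficient way to extract the derivatives at $\rho=0$ is via the logarithmic derivative: with $h(\rho,g) \triangleq (1+\rho g)^{-\nu}(1+a_s\rho g)^{\nu}$, one has $\partial_\rho\ln h = -\nu g/(1+\rho g) + \nu a_s g/(1+a_s\rho g)$, which at $\rho=0$ equals $-\nu(1-a_s)g = -\nu a_w g$. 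Hence $\Psi_w'(0) = -\nu a_w\mathbb E\{g_{\min}\}$. For the second derivative, use $\partial_\rho^2 h = h[(\partial_\rho\ln h)^2 + \partial_\rho^2\ln h]$ evaluated at $\rho=0$, with $\partial_\rho^2\ln h\vert_0 = \nu(1-a_s^2)g^2 = \nu a_w(1+a_s)g^2$, giving
\begin{align*}
\Psi_w''(0) = \nu a_w\bigl[\nu a_w + 1 + a_s\bigr]\mathbb E\{g_{\min}^2\}.
\end{align*}
Substituting into the chain-rule formula, $\ddot R_{w,\mathrm{NOMA}} = -(\log_2 e/\nu)\{\nu a_w[\nu a_w+1+a_s]\mathbb E\{g_{\min}^2\} - \nu^2 a_w^2(\mathbb E\{g_{\min}\})^2\}$, and then using the NOMA power-allocation constraint $a_s+a_w=1$ to replace $1+a_s = 2a_s+a_w$, the bracket collapses to the claimed $(\nu+1)a_w + 2a_s$.

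The only genuinely delicate step is tracking signs and correctly applying the identity $a_s+a_w=1$ at the end to recast $\nu a_w+1+a_s$ as $(\nu+1)a_w+2a_s$; the rest is mechanical differentiation inside an expectation, for which the required integrability to justify the interchange follows because the integrands and their first two $\rho$-derivatives are uniformly bounded on a neighbourhood of $\rho=0$ by polynomials in $g$ of degree at most two, and the second moments of $g_s$ and $g_{\min}$ are finite in $\alpha$-$\mu$ fading.
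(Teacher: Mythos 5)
Your proposal is correct and follows essentially the same route as the paper: differentiate the defining expression \eqref{Ei_def} under the expectation, evaluate at $\rho\to 0$ using $\mathbb E\{(1+\gamma_i)^{-\nu}\}\vert_{\rho=0}=1$, and simplify with $a_s+a_w=1$. Your use of the logarithmic-derivative identity $\partial_\rho^2 h = h[(\partial_\rho\ln h)^2+\partial_\rho^2\ln h]$ for the weak user is a tidier bookkeeping of the quotient-rule computation the paper carries out explicitly, but the substance is identical and all signs and the final recasting of $\nu a_w+1+a_s$ as $(\nu+1)a_w+2a_s$ check out.
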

\begin{proof}
	See Appendix~\ref{Appendix_Theorem_Derivative}.
\end{proof}
It can be noted from~\eqref{Derivative} that $\dot R_{i, \mathrm{NOMA}}$ is independent of the delay QoS exponent $\theta$, and is in fact equal to the first derivative of the ergodic rate evaluated at $\rho \to 0$. It can also be noted from~\eqref{Derivative} that $\ddot R_{i, \mathrm{NOMA}}$ decreases as the value of $\theta$ increases\footnote{This occurs because $\nu \propto \theta$, and $\operatorname{Var}(g_s) > 0 \implies \mathbb E \{g_{s}^2\} > (\mathbb E\{g_s\})^2$. The same is also true for $g_{\min}$.} (i.e., for a system with more stringent delay constraints). It is also noteworthy that following the arguments in~\cite{Verdu}, it can be shown that the first derivative $\dot R_{i, \mathrm{NOMA}}$ is related to the minimum energy per bit required for reliable transmission of information, denoted by $\left( \mathcal E_{b, i}/\sigma^2 \right)_{\min}$, as follows:
\begin{equation}
 	\left( \dfrac{\mathcal E_{b, i}}{\sigma^2} \right)_{\min} = \dfrac{1}{\dot R_{i, \mathrm{NOMA}}}, 
\end{equation}
whereas the second derivative $\ddot R_{i, \mathrm{NOMA}}$ is related to the \emph{wideband slope}, denoted by $\mathcal S_{0, i}$, via
\begin{equation}
	\mathcal S_{0, i} = -2 \dfrac{(\dot R_{i, \mathrm{NOMA}})^2}{\ddot R_{i, \mathrm{NOMA}}} \ln 2. \label{S0}
\end{equation}
Therefore,~\eqref{Derivative}-\eqref{S0} imply that with an increase in the value of the delay QoS exponent, the minimum energy per bit required for reliable transmission does not change, but the  wideband slope vanishes. The relevant mean values in~\eqref{Derivative} can be obtained using~\eqref{f_gi},~\eqref{f_gmin} and~\cite[eqn.~(3.326-2),~p.~337]{Grad}, as follows: 
\begin{align*}
\mathbb E(g_s) = \dfrac{\Omega_s^2}{\mu^{2/\alpha} \Gamma(\mu)} \Gamma\left(\mu + \dfrac{2}{\alpha}\right), 
\end{align*}
\begin{align*}
	\mathbb E(g_s^2) = \dfrac{\Omega_s^4}{\mu^{4/\alpha} \Gamma(\mu)}\Gamma\left(\mu + \dfrac{4}{\alpha}\right), 
\end{align*}
\begin{multline*}
\mathbb E(g_{\min}) =  \dfrac{1}{\Omega_s^{\alpha \mu} \Gamma(\mu)} \sum_{m = 0}^{\mu - 1} \dfrac{\tilde \Omega^{\mu + m + (2/\alpha)}}{m! \mu^{2/\alpha}\Omega_w^{m\alpha}} \Gamma \left( \mu + m + \dfrac{2}{\alpha}\right) \\
	+ \dfrac{1}{\Omega_w^{\alpha \mu} \Gamma(\mu)} \sum_{n = 0}^{\mu - 1} \dfrac{\tilde \Omega^{\mu + n + (2/\alpha)}}{n! \mu^{2/\alpha}\Omega_s^{n\alpha}} \Gamma \left( \mu + n + \dfrac{2}{\alpha}\right), 
\end{multline*}
\begin{multline*}
\mathbb E(g_{\min}^2) = \dfrac{1}{\Omega_s^{\alpha \mu} \Gamma(\mu)} \sum_{m = 0}^{\mu - 1} \dfrac{\tilde \Omega^{\mu + m + (4/\alpha)}}{m! \mu^{4/\alpha}\Omega_w^{m\alpha}} \Gamma \left( \mu + m + \dfrac{4}{\alpha}\right) \\
	+ \dfrac{1}{\Omega_w^{\alpha \mu} \Gamma(\mu)} \sum_{n = 0}^{\mu - 1} \dfrac{\tilde \Omega^{\mu + n + (4/\alpha)}}{n! \mu^{4/\alpha}\Omega_s^{n\alpha}}\Gamma \left( \mu + n + \dfrac{4}{\alpha}\right) . 
\end{multline*}
Therefore, using~\eqref{Ei_LowSNR}, \eqref{Derivative} and the preceding expressions, one can find a closed-form low-SNR approximation for $R_{\mathrm{sum, NOMA}}$. 
\subsection{Upper bound on the ER of downlink NOMA}
Using the concavity of the logarithmic function and Jensen's inequality, it follows from~\eqref{Ei_def} that 
	\begin{align}
		R_{i, \mathrm{NOMA}} \leq & \ -\dfrac{1}{\nu} \mathbb E_{\gamma_i} \left\{ \log_2\left[ (1 + \gamma_i)^{-\nu} \right] \right\} \notag \\
		= & \ \mathbb E_{\gamma_i} \left\{ \log_2 (1 + \gamma_i)\right\} \triangleq C_{i, \mathrm{NOMA}}. \label{Ei_tilde}
	\end{align}
Note that $C_{i, \mathrm{NOMA}}$ does not depend on the delay exponent $\theta$ and represents the ergodic rate of $U_i$, and that equality holds for the case when $\theta \to 0$, i.e., a system without any delay QoS constraint. Using~\eqref{Ei_tilde}, the upper bound on the ER of $U_s$ is given by 
\begin{align*}
C_{s, \mathrm{NOMA}} = \int_0^\infty \log_2 (1 + a_s \rho x) f_{g_s}(x) \mathrm dx.
\end{align*}
Substituting the expression for $f_{g_s}(x)$ from~\eqref{f_gi} into the preceding equation and following the arguments in~Appendix~\ref{Appendix_Theorem_Mellin_s}, an analytical expression for $C_{s, \mathrm{NOMA}}$ is obtained as 
	\begin{align}
		C_{s, \mathrm{NOMA}} = \dfrac{\mu^{\mu} G_{2 \alpha, 2 + 2\alpha}^{2 + 2 \alpha, \alpha} \left[ \left. \tfrac{1}{(a_s \rho)^{\alpha}} \left(\tfrac{\mu}{2 \Omega_s^{\alpha}} \right)^2 \right\vert \begin{smallmatrix} \zeta, \ \chi \\[0.1em] \Delta(2, 0), \ \zeta, \ \zeta \end{smallmatrix}\right] }{\sqrt{2} (\ln 2) (2 \pi)^{\alpha-0.5} \Gamma(\mu) \Omega_s^{\alpha \mu} (a_s \rho)^{0.5 \alpha \mu} }, \label{Es_tilde}
	\end{align}
where $\zeta \triangleq \Delta(\alpha, -0.5 \alpha \mu)$.	Similarly, using~\eqref{Ei_tilde}, an upper bound on the ER of $U_w$ is given by 
\begin{multline*}
C_{w, \mathrm{NOMA}} = \int_0^\infty \log_2(1 + \rho x) f_{g_{\min}}(x) \mathrm dx \\
 - \int_0^\infty \log_2(1 + a_s \rho x) f_{g_{\min}}(x) \mathrm dx. 
\end{multline*}
Substituting the expression for $f_{g_{\min}}(x)$ and following the arguments in~Appendix~\ref{Appendix_Theorem_Mellin_s}, an analytical expression for $C_{w, \mathrm{NOMA}}$ is given by 
\begin{align}
	 & \!\!\!\! C_{w, \mathrm{NOMA}} = \dfrac{1}{\sqrt{2} (\ln 2) (2 \pi)^{\alpha - 0.5} \Gamma(\mu)}  \notag \\
	& \times \left[ \dfrac{1}{\Omega_s^{\alpha \mu}} \sum_{m = 0}^{\mu - 1} \dfrac{\mu^{\mu + m}}{m! \Omega_w^{m \alpha}} \left\{  \dfrac{\mathcal G(\rho, m)}{\rho^{0.5 \alpha(\mu + m)}} - \dfrac{\mathcal G(a_s \rho, m)}{(a_s \rho)^{0.5 \alpha(\mu + m)}} \right\} \right. \notag \\
	& \!\!\!\! \left. + \dfrac{1}{\Omega_w^{\alpha \mu}} \sum_{n = 0}^{\mu - 1} \dfrac{\mu^{\mu + n} }{n! \Omega_s^{n \alpha} } \left\{ \dfrac{\mathcal G(\rho, n)}{ \rho^{0.5 \alpha(\mu + n)} }  - \dfrac{\mathcal G(a_s \rho, n)}{(a_s \rho)^{0.5 \alpha(\mu + n)}} \right\} \right], \label{Ew_tilde}
\end{align}
where 
\begin{align*}
\mathcal G(x, y) = G_{2 \alpha, 2 + 2\alpha}^{2 + 2 \alpha, \alpha} \left[ \left. \dfrac{1}{x^{\alpha}} \left(\dfrac{\mu}{2 \tilde \Omega} \right)^2 \right\vert \begin{smallmatrix} \psi_y, \ \phi_y \\[0.6em] \Delta(2, 0), \ \psi_y, \ \psi_y\end{smallmatrix}\right],
\end{align*}
$\psi_y \triangleq \Delta(\alpha, -0.5 \alpha \{\mu + y\})$ and $\phi_y \triangleq \Delta(\alpha, 1-0.5 \alpha \{\mu + y\})$.
Therefore using~\eqref{Es_tilde} and~\eqref{Ew_tilde}, an analytical upper bound on the sum ER (representing the ergodic sum-rate of the downlink NOMA system) can be obtained as $ C_{\mathrm{sum, NOMA}} = C_{s, \mathrm{NOMA}} + C_{w, \mathrm{NOMA}}$.
\subsection{Effective rate of downlink OMA}
In order to quantify the performance difference between the downlink NOMA system and the corresponding OMA system, we analyze the ER of the downlink OMA system in this subsection. In the case of OMA, the source transmits $\sqrt{\mathcal E} x_i$ to $U_i$ in orthogonal time slots. Therefore, the instantaneous received SNR at $U_i$ to decode $x_i$ is given by $\hat \gamma_i = \rho g_i$. The ER for the user $U_i$ in the downlink OMA system is therefore given by 
\begin{align*}
R_{s, \mathrm{OMA}} = & \ -\dfrac{1}{\nu} \log_2 \left[ \mathbb E_{\gamma_s} \{(1 + \hat \gamma_s)^{-0.5 \nu}\} \right] \\
= & \ -\dfrac{1}{\nu} \log_2 \left[ \int_0^\infty (1 + \rho x)^{-0.5 \nu} f_{g_s}(x) \mathrm dx \right]. 
\end{align*}
Substituting the expression for $f_{g_s}(x)$ from~\eqref{f_gi} and following the arguments in~Appendix~\ref{Appendix_Theorem_Mellin_s} yields
\begin{align}
		R_{s, \mathrm{OMA}} = -\dfrac{1}{\nu} \log_2 \left[ \dfrac{\alpha^{0.5 \nu} \mu^{\mu} G_{\alpha, 2 + \alpha}^{2 + \alpha, \alpha} \left[ \left. \tfrac{\mu^2}{4 \rho^{\alpha} \Omega_s^{2\alpha}} \right\vert \begin{smallmatrix} \chi \\[0.1em] \Delta(2, 0), \ \hat \xi \end{smallmatrix} \right]}{ \sqrt{2} (2\pi)^{\alpha - 0.5} \Omega_s^{\alpha \mu} \Gamma(\mu) \Gamma(0.5 \nu) \rho^{0.5 \alpha \mu}}  \right], \label{Es_hat}
	\end{align}
where $\hat \xi \triangleq \Delta (\alpha, 0.5 \nu - 0.5 \alpha \mu)$. An analytical expression for the ER of $U_w$ in the downlink OMA system, denoted by $R_{w, \mathrm{OMA}} $, can be obtained by replacing $\Omega_s$ with $\Omega_w$ in~\eqref{Es_hat}. Finally, the sum ER can be obtained as $R_{\mathrm{sum, OMA}} = R_{s, \mathrm{OMA}} + R_{w, \mathrm{OMA}}$.
\section{Results and Discussion}
To analyze the performance of the system, we assume that $BT = 1$ and $\Omega_s = 1$. First we analyze the effect of $\Omega_w$ on the ER of NOMA, and also on the difference between the performance of the NOMA and OMA systems. Fig.~\ref{MeanValue} shows numerically evaluated results for the ER of the strong and weak user, as well as the sum ER of downlink NOMA, for $\Omega_s = 1$ and different values of $\Omega_w$. It can be noted from the figure that as the value of $\Omega_w$ increases, $R_{w, \mathrm{NOMA}}$ also increases in the low-to-mid SNR regime. This happens because an increase in the value of $\Omega_w$ results in a better (average) channel quality between the source $S$ and the weak user $U_w$. Interestingly, for high SNR values, $R_{w, \mathrm{NOMA}}$ becomes (almost) constant regardless of the value of $\Omega_w$, which is in line with~\eqref{Ew_high}. As for all the cases of $\Omega_w$ we have considered the value of $\Omega_s$ in the corresponding system to be fixed at $1$, the value of $R_{\mathrm{sum, NOMA}}$ is larger for larger values of $\Omega_w$ in the low-to-mid SNR regime and converges to the same value in the high-SNR regime. For the same system parameters, Fig.~\ref{NOMA-OMA_Diff_Omega} shows the numerically computed results on the difference between the sum ER of downlink NOMA and the corresponding downlink OMA system, i.e., $R_{\mathrm{sum, NOMA}}- R_{\mathrm{sum, OMA}}$. It can be noted form the figure that this difference is larger for smaller values of $\Omega_w$, i.e., the case when the channel quality of the $S$-$U_s$ link and $S$-$U_w$ links are significantly different. 
\begin{figure}[t]
\centering
  \includegraphics[width = 0.99\linewidth, height = 6.3cm]{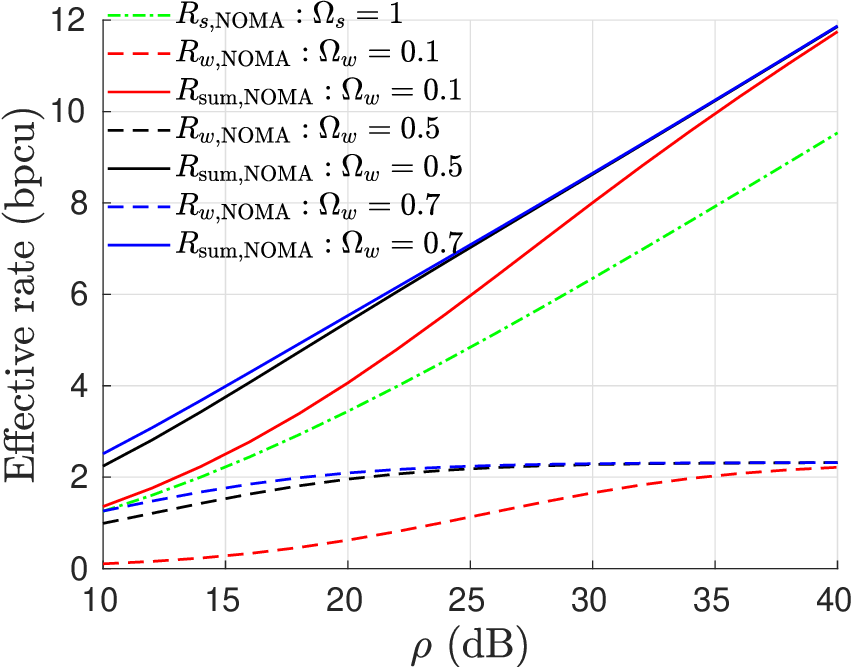}
  \caption{ER for the downlink NOMA system with $\alpha = \mu = 2$, $a_s = 0.2$ and $\theta = 1$.}
  \label{MeanValue}
\end{figure}
\begin{figure}[t]
  \centering
  \includegraphics[width = 0.99\linewidth, height = 6.3cm]{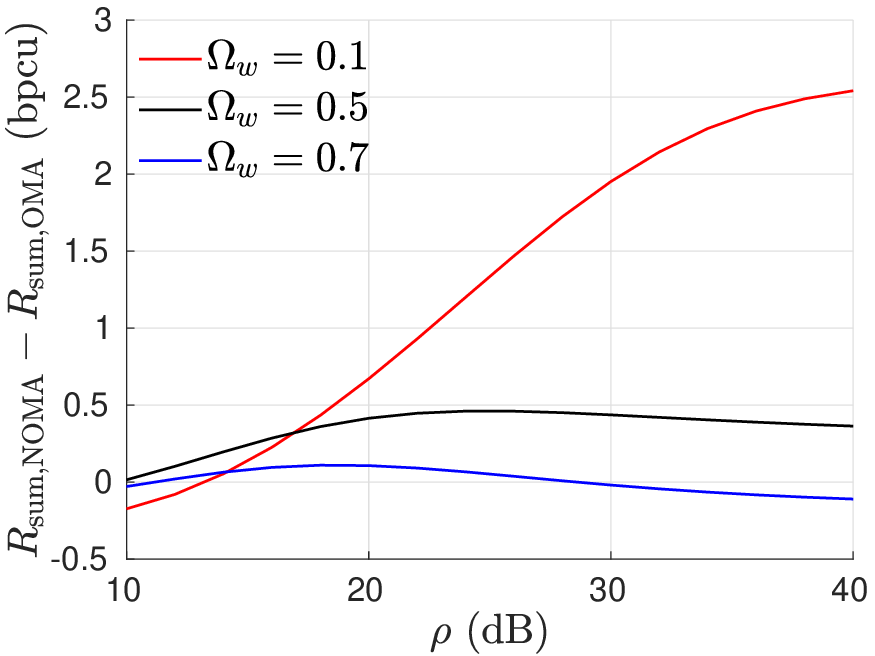}
  \caption{Difference between the sum ER of downlink NOMA and downlink OMA with $\alpha = \mu = 2$, $a_s = 0.2$, $\Omega_s = 1$ and $\theta = 1$.}
  \label{NOMA-OMA_Diff_Omega}
\end{figure}
This occurs because, for the case when $\Omega_s$ is fixed at $1$, although a larger value of $\Omega_w$ in the case of NOMA results in a higher sum ER in the low-to-mid SNR range, the corresponding increase in the ER of the weak user in the case of OMA is much larger, resulting in a smaller value of $R_{\mathrm{sum, NOMA}}- R_{\mathrm{sum, OMA}}$. Therefore, in the remaining figures in this paper, we will consider $\Omega_w = 0.1$. Assuming the target data rate for both $U_s$ and $U_w$ to be $r_{\mathrm{target}} = 2$ bpcu and following the arguments in~\cite{EL}, a legitimate range for $a_s$ is given by $0 < a_s < 1/2^r_{\mathrm{target}}$. Therefore, we use a one-dimensional numerical search over the discrete set $a_s \in \{0.01, 0.02, \ldots, 0.24\}$ in order to maximize the sum ER of downlink NOMA. For $\rho$ in the range from 10 dB to 40 dB, the optimal value of $a_s$ to maximize $R_{\mathrm{sum, NOMA}}$ is found to be 0.24, which is the maximum value considered for numerical evaluation. This occurs because most of the ER in $R_{\mathrm{sum, NOMA}}$ is obtained by $R_{s, \mathrm{NOMA}}$ (as the link between $S$ and $U_s$ is comparatively stronger). Therefore, throughout this section, we use $a_s = 0.24$.
\begin{figure}[t]
\centering
  \includegraphics[width = 0.99\linewidth, height = 6.3cm]{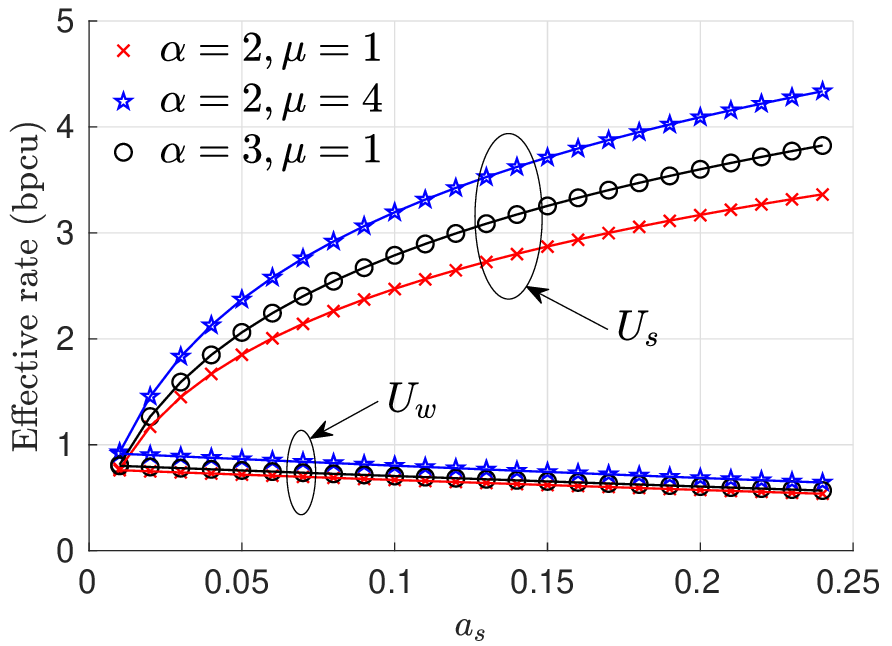}
  \caption{ER of the strong and weak users for $\theta = 0.5$ and $\rho = 20$ dB in downlink NOMA. Markers and (continuous) solid lines denote the numerically and analytically evaluated results, respectively.}
  \label{EC_Power}
\end{figure}
\begin{figure}[t]
  \centering
  \includegraphics[width = 0.99\linewidth, height = 6.3cm]{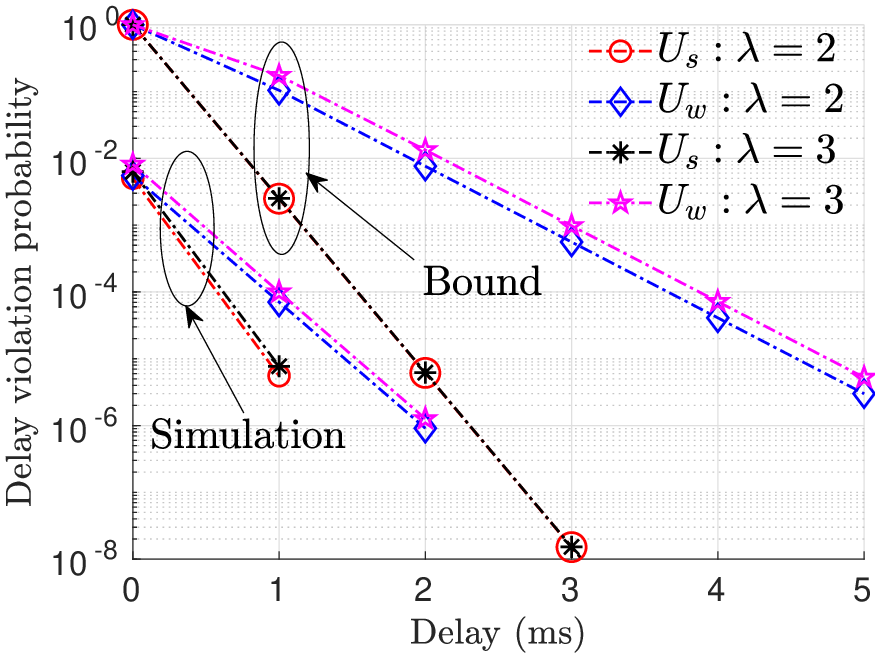}
  \caption{Delay violation probability for $\alpha = 2$, $\mu = 1$, $\rho = 10$ dB and $N = 168$ in downlink NOMA. Here the bounds are plotted using the derived analytical expressions.}
  \label{DelaySimulation}
\end{figure}
\begin{figure}[t]
\centering
  \includegraphics[width = 0.99\linewidth, height = 6.3cm]{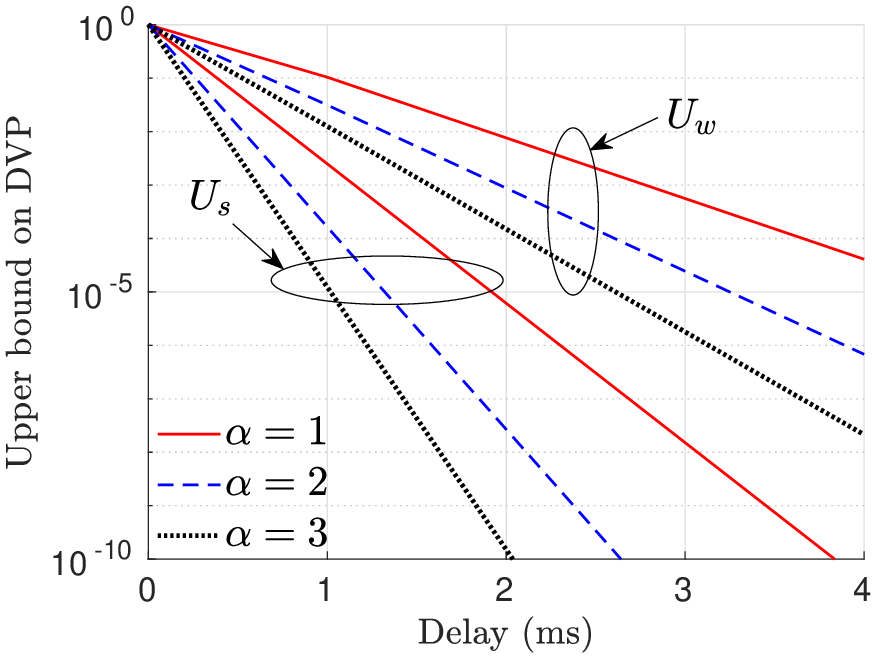}
  \caption{Analytically evaluated upper bound on the delay violation probability for $\mu = 1$, $\lambda = 2$, $\rho = 10$ dB and $N = 168$.}
  \label{DelayBound_alpha}
\end{figure}
\begin{figure}
  \centering
  \includegraphics[width = 0.99\linewidth, height = 6.3cm]{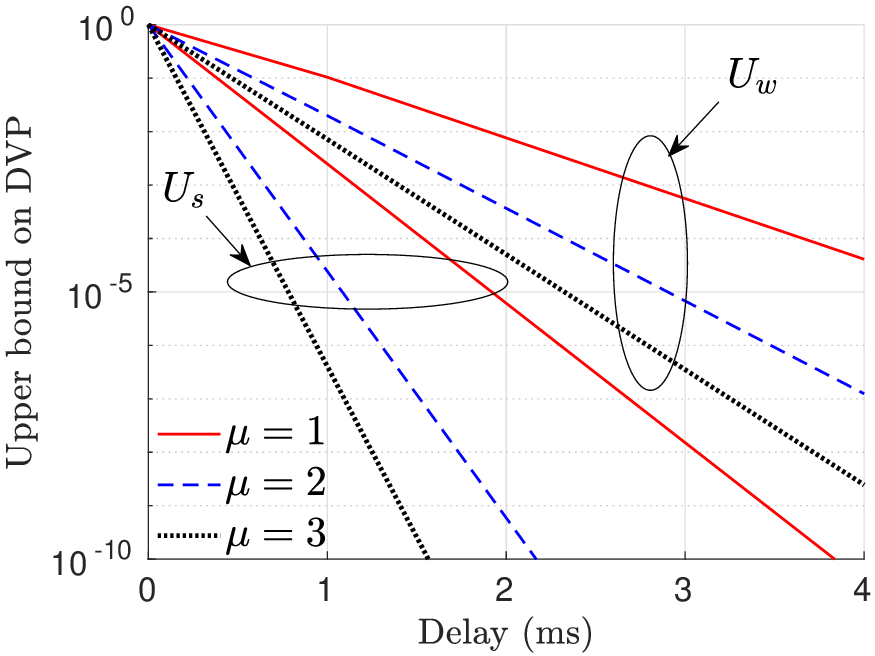}
  \caption{Analytically evaluated upper bound on the delay violation probability for $\alpha = 2$, $\lambda = 2$, $\rho = 10$ dB and $N = 168$.}
  \label{DelayBound_mu}
\end{figure}

\begin{figure}[t]
\centering
  \includegraphics[width = 0.99\linewidth, height = 6.3cm]{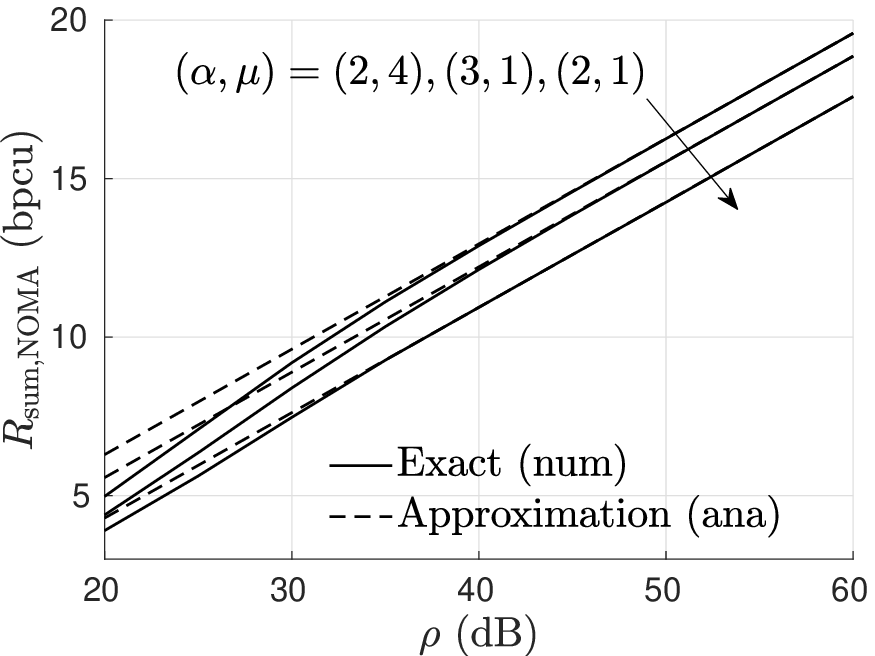}
  \caption{High-SNR approximation for the sum ER in downlink NOMA with delay QoS exponent $\theta = 0.5$.}
  \label{HighSNR}
\end{figure}
\begin{figure}[t]
  \centering
  \includegraphics[width = 0.99\linewidth, height = 6.3cm]{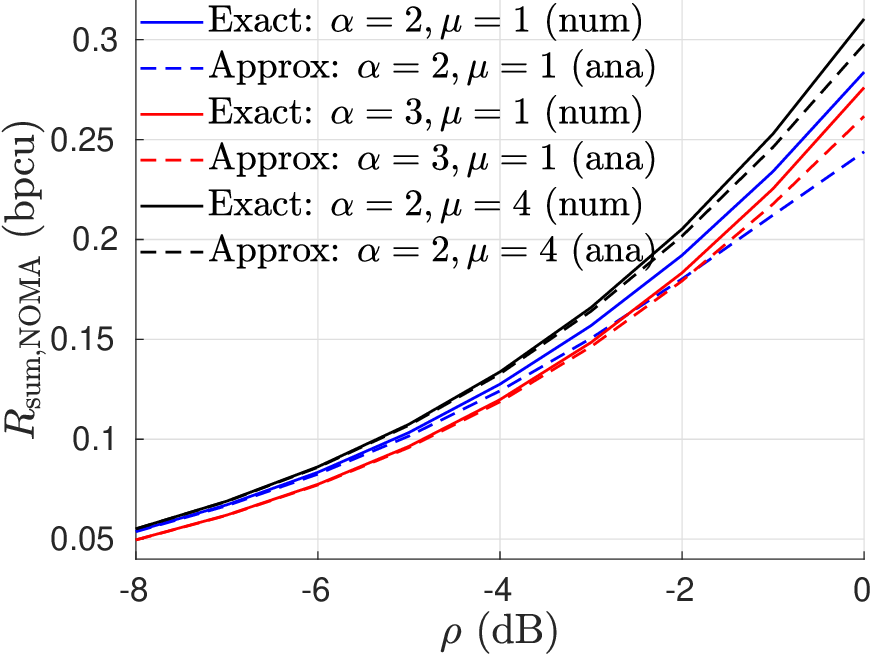}
  \caption{Low-SNR approximation for the sum ER in downlink NOMA with delay QoS exponent $\theta = 0.5$.}
  \label{LowSNR}
\end{figure}
\begin{figure}[t]
\centering
  \includegraphics[width = 0.99\linewidth, height = 6.3cm]{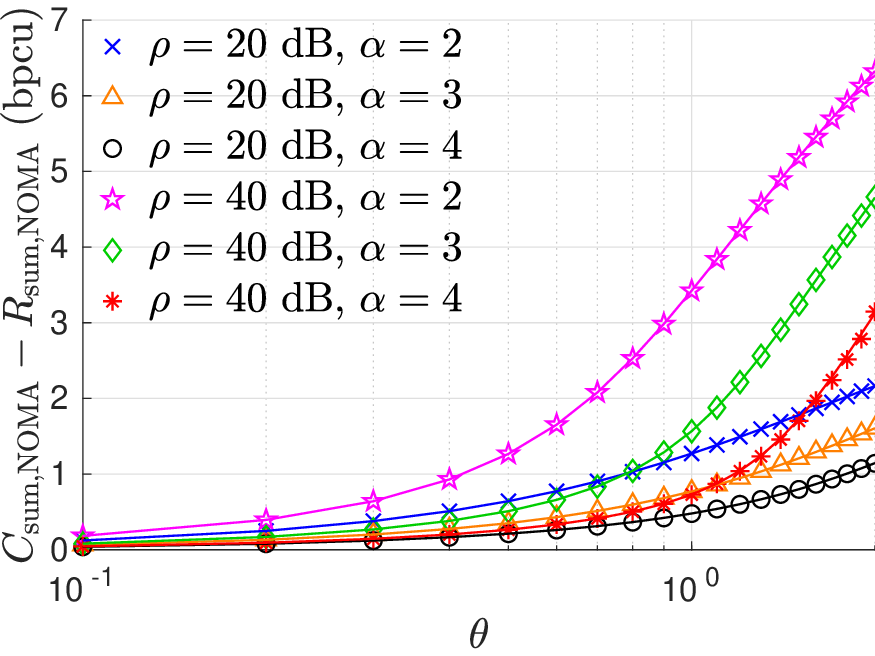}
  \caption{Difference between the ergodic sum-rate and sum ER in downlink NOMA for $\mu = 1$. Markers and (continuous) solid lines denote the numerically and analytically evaluated results, respectively.}
  \label{RateLoss_alpha}
\end{figure}
\begin{figure}
  \centering
  \includegraphics[width = 0.99\linewidth, height = 6.2cm]{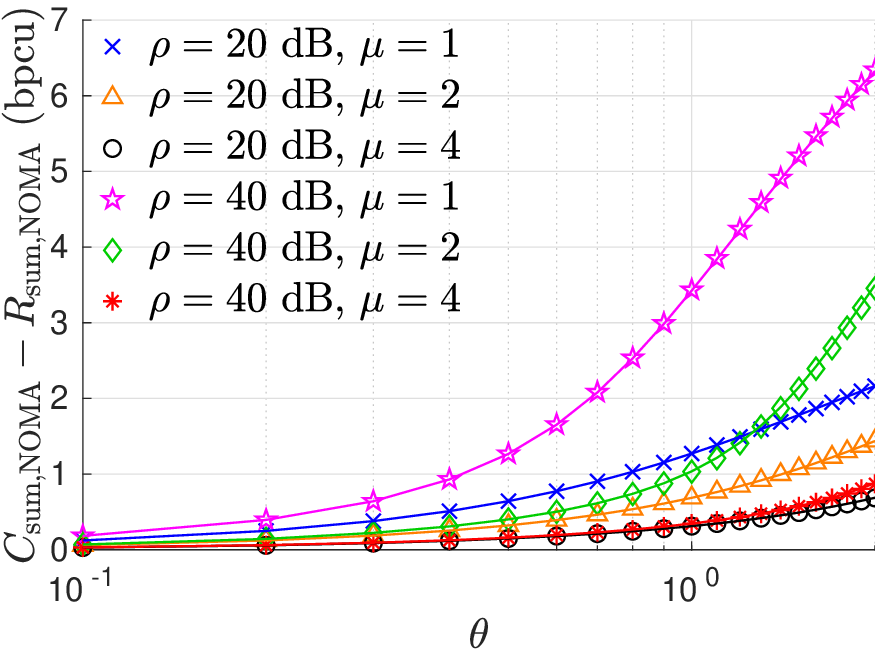}
  \caption{Difference between the ergodic sum-rate and sum ER in downlink NOMA for $\alpha = 2$. Markers and (continuous) solid lines denote the numerically and analytically evaluated results, respectively.}
  \label{RateLoss_mu}
\end{figure} 
Fig.~\ref{EC_Power} shows the variation of the ER of downlink NOMA for $U_s$ and $U_w$ for different $\alpha$ and $\mu$ w.r.t. the power-allocation coefficient. It can be noted from the figure that although the ER of $U_w$ decreases linearly with increasing value of $a_s$, the corresponding increase in the ER of $U_s$ is non-linear, resulting in a significant increase in $R_{\mathrm{sum, NOMA}}$. Similar results were shown in~\cite[Figs.~5~and~6]{SubOptimal} for the case of Rayleigh fading.

Fig.~\ref{DelaySimulation} shows the variation of the exact DVP and the corresponding upper bound w.r.t. the target delay $\vartheta$. It can be noted from the figure that the delay violation probability for $U_s$ is small as compared to that of $U_w$, because of the higher departure rate of the data intended for $U_s$. The figure also shows that with an increase in the data arrival rate (which results in a higher building rate of the queue), the delay violation probability increases for both $U_s$ and $U_w$. It is noteworthy that although the derived upper bounds on the probability of delay violation are not very tight (bounds of a similar nature were derived in~\cite{ACM_Paper},~\cite{FiniteBlockLength-DelayViolation},~\cite{LargeWirelessNetworks},~\cite{IoT-NOMA},~\cite{NOMA_JSTSP} and~\cite{UplinkNOMA-FiniteBlockLength}) the slope of the bounds are almost identical to that of the actual (simulated) curves. Therefore, these bounds can give good insights regarding the rate of decay of the delay violation probability curve. 

Figs.~\ref{DelayBound_alpha} and~\ref{DelayBound_mu} show the upper bound on the delay violation probability for different values of $\alpha$ (with $\mu$ fixed at $1$) and for different values of $\mu$ (with $\alpha$ fixed at 2), respectively. It can be noted from these figures that with an increase in the value of $\alpha$ or $\mu$, the probability of delay violation decreases at a higher rate. A large value of $\alpha$ (with fixed $\mu$) or $\mu$ (with fixed $\alpha$) represents a less severe fading\footnote{Note that for $\mu = 1$ the $\alpha$-$\mu$ distribution converges to the Weibull distribution with shape parameter $\alpha$, for which the amount of fading is given by $\texttt{AF} = [\Gamma(1 + \tfrac{4}{\alpha})/\{\Gamma(1 + \tfrac{2}{\alpha})\}^2] - 1$. Similarly, for $\alpha = 2$ the $\alpha$-$\mu$ fading model is equivalent to Nakagami-$m$ fading with shape parameter $m = \mu$, for which the amount of fading is given by $\texttt{AF} = 1/\mu$.}, resulting in a higher departure rate of the data at the queue at $S$. 

Figs.~\ref{HighSNR} and~\ref{LowSNR} show the approximations for the sum ER of the downlink NOMA system for $\rho \gg 1$ (high-SNR) and $\rho \to 0$ (low-SNR), respectively. The figures demonstrate a linear growth in the high-SNR regime and an exponential growth in the low-SNR regime for the sum ER. An excellent match between the exact (numerical) and the approximated (closed-form) results (in the relevant SNR regimes) confirms the correctness of the derived analytical approximations.

Figs.~\ref{RateLoss_alpha} and~\ref{RateLoss_mu} show the difference between the ergodic sum-rate and the sum ER (i.e., the rate loss $C_{\mathrm{sum, NOMA}} - R_{\mathrm{sum, NOMA}}$) for the downlink NOMA system, with varying delay QoS exponent $\theta$, for different values of $\alpha$ (with $\mu = 1$) and $\mu$ (with $\alpha = 2$), respectively. It can be noted from Fig.~\ref{RateLoss_alpha} that the rate loss decreases with increasing $\alpha$. Therefore, Fig.~\ref{RateLoss_alpha} indicates that a given delay QoS target will result in a more detrimental impact on the effective capacity in more severe fading conditions. Also, for a fixed value of $\alpha$ (with $\mu = 1$), the capacity loss becomes more pronounced as the SNR increases. Fig.~\ref{RateLoss_mu} shows the rate loss $C_{\mathrm{sum, NOMA}} - R_{\mathrm{sum, NOMA}}$ for different values of $\mu$ with $\alpha = 2$. Effects qualitatively similar to those observed in the previous case can be noticed here also; the loss in capacity increases with the severity of fading and also with the operational SNR.

\begin{figure}[t]
\centering
  \includegraphics[width = 0.99\linewidth, height = 6.3cm]{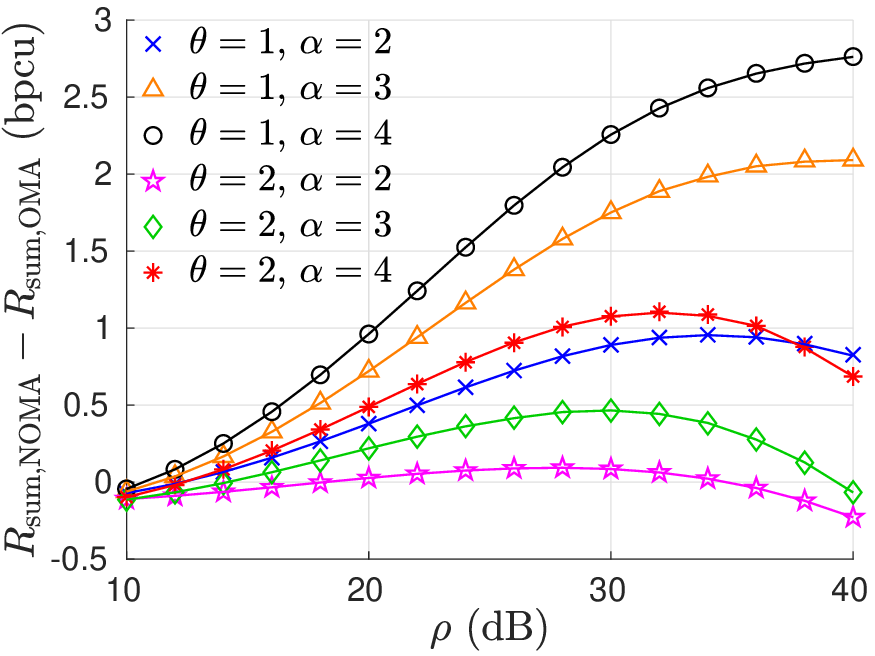}
  \caption{Difference between the sum ER in downlink NOMA and downlink OMA for $\mu = 1$. Markers and (continuous) solid lines denote the numerically and analytically evaluated results, respectively.}
  \label{EC_Diff_alpha}
\end{figure}
\begin{figure}[t]
  \centering
  \includegraphics[width = 0.99\linewidth, height = 6.2cm]{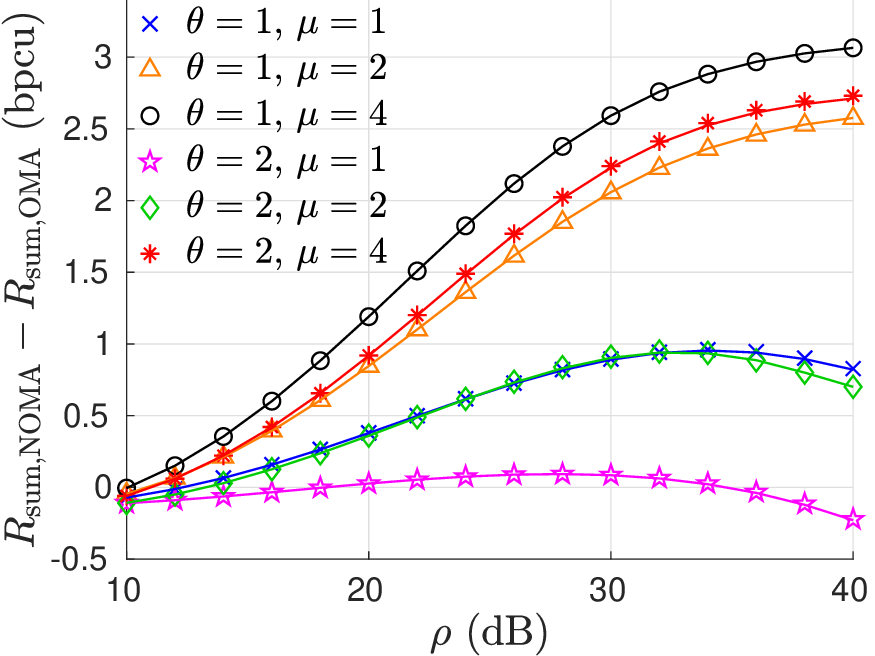}
  \caption{Difference between the sum ER in downlink NOMA and downlink OMA for $\alpha = 2$. Markers and (continuous) solid lines denote the numerically and analytically evaluated results, respectively.}
  \label{EC_Diff_mu}
\end{figure}
Figs.~\ref{EC_Diff_alpha} and~\ref{EC_Diff_mu} demonstrate the difference between the sum ER of downlink NOMA and the corresponding OMA system (i.e., $R_{\mathrm{sum, NOMA}} - R_{\mathrm{sum, OMA}}$) for different values of $\alpha$ (with $\mu = 1$) and $\mu$ (with $\alpha = 2$), respectively, with varying SNR $\rho$. It is clear from Fig.~\ref{EC_Diff_alpha} that for a fixed value of $\alpha$ and $\theta$, the performance gap between downlink NOMA and the corresponding OMA system increases (with NOMA outperforming OMA) in the low-to-mid-SNR regime and then this difference decreases in the high-SNR regime. Also, while NOMA outperforms OMA in most of the practical SNR range, the gain is smaller in more severe fading conditions (smaller $\alpha$ or $\mu$). It can also be seen from both figures that in the presence of a more stringent delay QoS constraint, the performance advantage of NOMA over OMA is considerably lessened. In particular, it can be seen that for the case of the stricter delay constraint ($\theta = 2$) and severe fading $(\alpha, \mu) = (2,1)$, the difference in achievable rate performance between NOMA and OMA is negligible.
\section{Conclusion}
In this paper, we analyzed the performance of a delay-constrained two-user downlink NOMA system over $\alpha$-$\mu$ fading. We derived an upper bound on the delay violation probability for each user, and an exact analytical expression for the sum effective rate of the downlink NOMA system over the generalized $\alpha$-$\mu$ channel, which as special cases yields the sum ER of the downlink NOMA system over a variety of channel models of practical interest, including Rayleigh, Nakagami-$m$, Weibull, Erlang, and Chi-squared fading. Analytical expressions for the approximations to the sum ER of downlink NOMA in high-SNR and low-SNR regimes were also derived, which are accurate in the corresponding regions. Simulation results indicate that the delay violation probability for the strong user is less compared to that of the weak user, and this probability of delay violation decreases with a decrease in the arrival data rate at the source as well as with the severity of fading. The results also indicate that for the case of NOMA, with increasing value of delay exponent the difference between ergodic sum-rate and sum ER increases very rapidly, but this difference decreases as the fading becomes less severe. Moreover, for a fixed channel parameters and delay exponent, this rate loss increases at high SNR. Finally, the difference between the sum ER of downlink NOMA and the corresponding OMA decreases for large values of delay exponent $\theta$ as well as under severe fading conditions.
\appendices
\section{Proof of Theorem~\ref{Theorem_Mellin_s}} \label{Appendix_Theorem_Mellin_s}
For the case of $U_s$, we have 
\begin{align*}
\mathcal M_{\varphi_s} (1 - \mathscr S) = & \  \mathbb E_{g_s}\{(1 + a_s \rho g_s)^{-\varpi}\} \\
= & \ \int_0^\infty (1 + a_s \rho x)^{-\varpi} f_{g_s} (x) \mathrm dx,
\end{align*}
where $\varpi \triangleq N \mathscr S/\ln 2$. Substituting the expression for $f_{g_s}(x)$ from~\eqref{f_gi} into the preceding equation, we obtain
\begin{multline*}
	\mathcal M_{\varphi_s} (1 - \mathscr S) = \dfrac{\alpha \mu^\mu}{2 \Omega_s^{\alpha \mu} \Gamma(\mu)} \int_0^\infty x^{0.5 \alpha \mu - 1} (1 + a_s \rho x)^{-\varpi} \\
	\times \exp \left(-\dfrac{\mu x^{0.5 \alpha}}{\Omega_s^{\alpha}}\right) \mathrm dx.
\end{multline*}
Using the relations $(1 + x)^{y} = \tfrac{1}{\Gamma(-y)}G_{1, 1}^{1, 1}\left[ x \left\vert \begin{smallmatrix}1 + y \\0 \end{smallmatrix}\right. \right]$ and $\exp(-x) = G_{0, 1}^{1, 0} \left[x \left\vert \begin{smallmatrix} - \\ 0 \end{smallmatrix}\right. \right]$ from~\cite[eqn.~(10)]{Reduce} and~\cite[eqn.~(11)]{Reduce}, respectively, it follows from the preceding equation that 
\begin{multline*}
	\mathcal M_{\varphi_s}(1 - \mathscr S) =  \dfrac{\alpha \mu^\mu}{2 \Omega_s^{\alpha \mu} \Gamma(\mu) \Gamma(\varpi)} \int_0^\infty x^{0.5 \alpha \mu - 1} \\
	\times G_{1, 1}^{1, 1} \left[ 1 + a_s \rho x \left \vert \begin{smallmatrix} 1 - \varpi \\[0.6em] 0\end{smallmatrix}\right.\right] G_{0, 1}^{1, 0} \left[\dfrac{\mu x^{0.5 \alpha}}{\Omega_s^{\alpha}} \left\vert \begin{smallmatrix} - \\[1em] 0 \end{smallmatrix} \right. \right] \mathrm dx.
\end{multline*}
Solving the integral given in the equation above using~\cite[eqn.~(21)]{Reduce}, an analytical expression for $\mathcal M_{\varphi_s}(1 - \mathscr S)$ can be given by~\eqref{Mellin_s}; this concludes the proof.
\section{Proof of Theorem~\ref{Theorem_Mellin_w}} \label{Appendix_Theorem_Mellin_w}
For the case of the weak user, we have 
\begin{align*}
\mathcal M_{\varphi_w}(1 - \mathscr S) = & \ \mathbb E_{g_{\min}}\left\{\left(1 + \dfrac{a_w \rho g_{\min}}{a_s \rho g_{\min} + 1}\right)^{-\varpi}\right\} \\
= & \ \int_0^\infty (1 + \rho x)^{-\varpi} (1 + a_s \rho x)^{\varpi} f_{g_{\min}}(x) \mathrm dx.
\end{align*}
Substituting the expression for $f_{g_{\min}}(x)$ from~\eqref{f_gmin} into the preceding equation, we obtain 
\begin{multline*}
\mathcal M_{\varphi_w}(1 - \mathscr S) = \dfrac{\alpha}{2\Omega_s^{\alpha \mu} \Gamma(\mu)} \sum_{m = 0}^{\mu - 1} \dfrac{\mu^{\mu + m}}{m! \Omega_w^{m \alpha}} \int_0^\infty (1 + \rho x)^{-\varpi} \\ 
\times (1 + a_s \rho x)^{\varpi} \exp \left( -\dfrac{\mu x^{0.5 \alpha}}{\tilde \Omega}\right) x^{0.5 \alpha(\mu + m) - 1} \mathrm dx \\
+ \dfrac{\alpha}{2\Omega_w^{\alpha \mu} \Gamma(\mu)} \sum_{n = 0}^{\mu - 1} \dfrac{\mu^{\mu + n}}{n! \Omega_s^{n \alpha}} \int_0^\infty (1 + \rho x)^{-\varpi} (1 + a_s \rho x)^{\varpi} \\
\times \exp \left( -\dfrac{\mu x^{0.5 \alpha}}{\tilde \Omega}\right) x^{0.5 \alpha(\mu + n) - 1} \mathrm dx.
\end{multline*}
Using~\cite[eqns.~(10) and~(11)]{Reduce}, it follows from the preceding equation that 
\begin{multline*}
\mathcal M_{\varphi_w}(1 - \mathscr S) = \dfrac{\alpha}{2\Omega_s^{\alpha \mu} \Gamma(\mu) \Gamma(\varpi) \Gamma(-\varpi)} \sum_{m = 0}^{\mu - 1} \dfrac{\mu^{\mu + m}}{m! \Omega_w^{m \alpha}} \\ 
\times \int_0^\infty \!\!\!\! x^{0.5 \alpha(\mu + m) - 1}  G_{1, 1}^{1, 1} \left[ 1 \!+\! \rho x \left\vert \begin{smallmatrix} 1 - \varpi \\[0.6em] 0\end{smallmatrix}\right. \!\! \right] G_{1, 1}^{1, 1} \left[ 1 \!+\! a_s \rho x \left\vert \begin{smallmatrix} 1 + \varpi \\[0.6em] 0\end{smallmatrix}\right.\!\! \right]  \\
 \times  G_{0, 1}^{1, 0} \left[ \dfrac{\mu x^{0.5 \alpha}}{\tilde \Omega} \left\vert \begin{smallmatrix} - \\[1em] 0 \end{smallmatrix} \right. \!\! \right] \mathrm dx + \dfrac{\alpha}{2\Omega_w^{\alpha \mu} \Gamma(\mu) \Gamma(\varpi) \Gamma(-\varpi)} \sum_{n = 0}^{\mu - 1} \dfrac{\mu^{\mu + n}}{n! \Omega_s^{n \alpha}} \\
\times \int_0^\infty \!\!\!\! x^{0.5 \alpha(\mu + n) - 1} G_{1, 1}^{1, 1} \left[ 1 \!+\! \rho x \left\vert \begin{smallmatrix} 1 - \varpi \\[0.6em] 0\end{smallmatrix}\right. \!\! \right] G_{1, 1}^{1, 1} \left[ 1 \!+\! a_s \rho x \left\vert \begin{smallmatrix} 1 + \varpi \\[0.6em] 0\end{smallmatrix}\right. \!\! \right] \\
\times G_{0, 1}^{1, 0} \left[ \dfrac{\mu x^{0.5 \alpha}}{\tilde \Omega} \left\vert \begin{smallmatrix} - \\[1em] 0 \end{smallmatrix} \right. \right] \mathrm dx.
\end{multline*}
Using the relation between Meijer's G-function and Fox's H-function given in~\cite[eqn.~(6.2.8)]{Springer}, the preceding equation yields
\begin{multline*}
\mathcal M_{\varphi_w}(1 - \mathscr S) =  \dfrac{\alpha}{2\Omega_s^{\alpha \mu} \Gamma(\mu) \Gamma(\varpi) \Gamma(-\varpi)} \sum_{m = 0}^{\mu - 1} \dfrac{\mu^{\mu + m}}{m! \Omega_w^{m \alpha}} \\
\times \int_0^\infty  x^{0.5 \alpha(\mu + m) - 1} H_{1, 1}^{1, 1} \left[ 1 + \rho x \left\vert \begin{smallmatrix} (1 - \varpi, 1) \\[0.6em] (0, 1)\end{smallmatrix}\right.\right] \\
\times  H_{1, 1}^{1, 1} \left[ 1 + a_s \rho x \left\vert \begin{smallmatrix} (1 + \varpi, 1) \\[0.6em] (0, 1)\end{smallmatrix}\right.\right] H_{0, 1}^{1, 0} \left[ \dfrac{\mu x^{0.5 \alpha}}{\tilde \Omega} \left\vert \begin{smallmatrix} - \\[1em] (0, 1) \end{smallmatrix} \right. \right] \mathrm dx \\
+ \dfrac{\alpha}{2\Omega_w^{\alpha \mu} \Gamma(\mu) \Gamma(\varpi) \Gamma(-\varpi)} \sum_{n = 0}^{\mu - 1} \dfrac{\mu^{\mu + n}}{n! \Omega_s^{n \alpha}} \int_0^\infty x^{0.5 \alpha(\mu + n) - 1} \\
\times H_{1, 1}^{1, 1} \left[ 1 + \rho x \left\vert \begin{smallmatrix} (1 - \varpi, 1) \\[0.6em] (0, 1)\end{smallmatrix}\right.\right] H_{1, 1}^{1, 1} \left[ 1 + a_s \rho x \left\vert \begin{smallmatrix} (1 + \varpi, 1) \\[0.6em] (0, 1)\end{smallmatrix}\right.\right] \\
\times H_{0, 1}^{1, 0} \left[ \dfrac{\mu x^{0.5 \alpha}}{\tilde \Omega} \left\vert \begin{smallmatrix} - \\[1em] (0, 1) \end{smallmatrix} \right. \right] \mathrm dx.
\end{multline*}
Using the substitution $\tilde x = x^{0.5 \alpha}$ and solving the resulting integrals using~\cite[eqn.~(2.3)]{Mittal}, the analytical expression for $\mathcal M_{\varphi_w} (1 - \mathscr S)$ reduces to~\eqref{Mellin_w}; this completes the proof.
\begin{figure*}
\vskip-0.15in
\section{Proof of Theorem~\ref{Theorem_Derivative}} \label{Appendix_Theorem_Derivative}
For the case of the strong user, from~\eqref{Ei_def} it follows that
\begin{align*}
		 \dot R_{s, \mathrm{NOMA}} \triangleq & \ \left. \dfrac{\mathrm d R_{s, \mathrm{NOMA}}}{\mathrm d \rho} \right\vert_{\rho \to 0} = \left. \dfrac{\mathrm d}{\mathrm d \rho} \left( -\dfrac{1}{\nu} \log_2 \left[ \mathbb E_{g_s} \left\{ (1 + a_s \rho g_s)^{\nu}\right\}\right]\right) \right\vert_{\rho \to 0} \\
		= & \ \left. -\dfrac{\log_2(e)}{\nu} \times \dfrac{\mathbb E_{g_s} \left\{ -\nu (1 + a_s \rho g_s)^{-\nu - 1} a_s g_s \right\}}{\mathbb E_{g_s} \left\{ (1 + a_s \rho g_s)^{-\nu}\right\}}\right\vert_{\rho \to 0} = \log_2(e) a_s \mathbb E\{g_s\},
\end{align*}
and
\begin{align*}
		& \ddot R_{s, \mathrm{NOMA}} \triangleq \left. \dfrac{\mathrm d^2 R_{s, \mathrm{NOMA}}}{\mathrm d \rho^2} \right\vert_{\rho \to 0} \\
		= & \! \left. -\dfrac{\log_2 (e)}{\nu} \times \dfrac{\mathbb E_{g_s} \left\{(1 + a_s \rho g_s)^{-\nu}\right\} \mathbb E_{g_s}\! \left\{ \nu (\nu + 1) (1 + a_s \rho g_s)^{-\nu - 2} a_s^2 g_s^2 \right\} \!-\! \left( \mathbb E_{g_s} \left\{ -\nu (1 + a_s \rho g_s)^{-\nu - 1} a_s g_s\right\} \right)^2}{\left( \mathbb E_{g_s} \left\{ (1 + a_s \rho g_s)^{-\nu}\right\}\right)^2} \right\vert_{\rho \to 0} \\
		= & \ -\dfrac{\log_2(e)}{\nu} \left[ \nu (\nu + 1) a_s^2 \mathbb E \left\{g_s^2\right\} - \nu^2 a_s^2 \left( \mathbb E\{g_s\} \right)^2\right] = \log_2(e) a_s^2 \left[ \nu \left( \mathbb E \{g_s\}\right)^2 - (\nu + 1) \mathbb E \left\{ g_s^2\right\}\right].
\end{align*}
Similarly, for the weak user, it follows from~\eqref{Ei_def} that 
\begin{align*}
		& \dot R_{w, \mathrm{NOMA}} \triangleq \left. \dfrac{\mathrm d R_{w, \mathrm{NOMA}}}{\mathrm d \rho} \right\vert_{\rho \to 0} = \left. \dfrac{\mathrm d}{\mathrm d \rho} \left( -\dfrac{1}{\nu} \log_2 \left[ \mathbb E_{g_{\min}} \left\{ \left( \dfrac{1 + \rho g_{\min}}{1 + a_s \rho g_{\min}} \right)^{-\nu} \right\} \right]\right) \right\vert_{\rho \to 0} \\
		= & \ -\dfrac{\log_2(e)}{\nu} \left[\mathbb E_{g_{\min}} \left\{ \left(\dfrac{1 + \rho g_{\min}}{1 + a_s \rho g_{\min}} \right)^{-\nu} \right\} \right]^{-1} \\
		& \hspace{4cm}\times \left. \mathbb E_{g_{\min}} \left\{ -\nu \left(\dfrac{1 + \rho g_{\min}}{1 + a_s \rho g_{\min}} \right)^{-\nu - 1} \dfrac{(1 + a_s \rho g_{\min}) g_{\min} - a_s g_{\min} (1 + \rho g_{\min})}{(1 + a_s \rho g_{\min})^2}\right\} \right\vert_{\rho \to 0} \\
		= & \ -\dfrac{\log_2(e)}{\nu} \mathbb E_{g_{\min}} \left\{ -\nu g_{\min} (1 - a_s)\right\} = \log_2(e) a_w \mathbb E \left\{ g_{\min} \right\}, 
\end{align*}
and
\begin{align*}
	& \ddot R_{w, \mathrm{NOMA}} \triangleq \left. \dfrac{\mathrm d^2 R_{w, \mathrm{NOMA}}}{\mathrm d \rho^2} \right\vert_{\rho \to 0} = \dfrac{-\log_2(e)}{\nu} \left[ \mathbb E_{g_{\min}} \left\{ \left( \dfrac{1 + \rho g_{\min}}{1 + a_s \rho g_{\min}}\right)^{-\nu}\right\}\right]^{-2} \left[ \mathbb E_{g_{\min}} \left\{ \left( \dfrac{1 + \rho g_{\min}}{1 + a_s \rho g_{\min}}\right)^{-\nu}\right\}\right. \\
	& \times \mathbb E_{g_{\min}} \left\{ \nu (\nu + 1) \left( \dfrac{1 + \rho g_{\min}}{1 + a_s \rho g_{\min}}\right)^{-\nu - 2} \left(\dfrac{(1 + a_s \rho g_{\min}) g_{\min} - a_s g_{\min} (1 + \rho g_{\min})}{(1 + a_s \rho g_{\min})^2}\right)^2 \!- \!\nu \left(\dfrac{1 + \rho g_{\min}}{1 + a_s \rho g_{\min}} \right)^{-\nu - 1}\right. \\
	& \times \left. \dfrac{(1 + a_s \rho g_{\min})^2 (a_s g_{\min}^2 - a_s g_{\min}^2) - 2 (1 + a_s \rho g_{\min}) a_s g_{\min} [(1 + a_s \rho g_{\min}) g_{\min} - a_s g_{\min} (1 + \rho g_{\min})]}{(1 + a_s \rho g_{\min})^4} \right\} \\
	& - \left. \left. \left( \mathbb E_{g_{\min}} \left\{ -\nu \left( \dfrac{1 + \rho g_{\min}}{1 + a_s \rho g_{\min}}\right)^{-\nu - 1} \dfrac{(1 + a_s \rho g_{\min}) g_{\min} - a_s g_{\min}(1 + \rho g_{\min})}{(1 + a_s \rho g_{\min})^2}\right\} \right)^2 \right] \right\vert_{\rho \to 0}\\
	= & \dfrac{-\log_2(e)}{\nu} \left[ \mathbb E_{g_{\min}} \left\{ \nu (\nu + 1) (g_{\min} - a_s g_{\min})^2 + 2 \nu a_s g_{\min} (g_{\min} - a_s g_{\min})\right\} - \left( \mathbb E_{g_{\min}}\left\{-\nu (g_{\min} - a_s g_{\min}) \right\}\right)^2\right] \\
	= & \log_2 (e) a_w \left[ \nu a_w \left( \mathbb E \left\{ g_{\min}\right\} \right)^2 - \left\{ (\nu + 1) a_w + 2 a_s\right\} \mathbb E \left\{ g_{\min}^2\right\} \right].
\end{align*}
\hrulefill
\vskip-0.15in
\end{figure*}
\Urlmuskip=0mu plus 1mu\relax
\bibliographystyle{IEEEtran}
\bibliography{AlphaMuEC}
\begin{IEEEbiography}[{\includegraphics[width=1in,height=1.25in,clip,keepaspectratio]{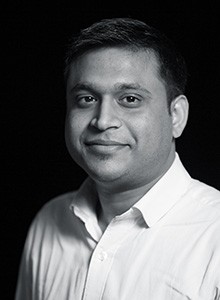}}]%
{Vaibhav Kumar} (S'17) received the B.E. degree in electronics and telecommunications engineering from CSVTU, Bhilai, India in 2012, the M.Tech. degree  in electronics and communications engineering from The LNM Institute of Information Technology, Jaipur, India in 2015, and the Ph.D. degree in electronic engineering from University College Dublin, Ireland in 2020. 

From 2012 to 2013, he worked as a lecturer in the Dept. of Biomedical Engineering, National Institute of Technology, Raipur, India. From 2015 to 2016, he worked as a project associate in the project entitled ``Mobile broadband service support over cognitive radio networks,'' funded by Media Lab Asia, Govt. of India. He was a visiting research student at Indian Institute of Technology~--~Delhi, India, from Jan 2019 to May 2019, under the Erasmus+~ICM research program. Since April 2020, he is working as a postdoctoral research fellow at University College Dublin, Ireland. His research interests include wireless communication theory, physical-layer issues in cognitive radio networks, physical-layer network coding and non-orthogonal multiple access techniques.
\end{IEEEbiography}
\begin{IEEEbiography}[{\includegraphics[width=1in,height=1.25in,clip,keepaspectratio]{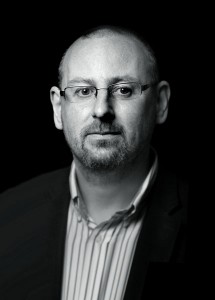}}]%
{Barry Cardiff} (M'06--SM'19) received the B.Eng., M.Eng.Sc., and Ph.D. degrees in electronic engineering from University College Dublin, Ireland, in 1992, 1995 and 2011, respectively. 

He was a Senior Design Engineer and Systems Architect for Nokia
from 1993 to 2001, moving to Silicon \& Software Systems (S3 group) thereafter as a Systems Architect in their R\&D division focused on wireless communications and digitally assisted circuit design. Since 2013 he is an Assistant Professor at University College Dublin, Ireland. His research interests are in digitally assisted circuit design and signal processing for wireless and optical communication systems. He holds several US patents related to wireless communication. 
\end{IEEEbiography}
\begin{IEEEbiography}[{\includegraphics[width=1in,height=1.25in,clip,keepaspectratio]{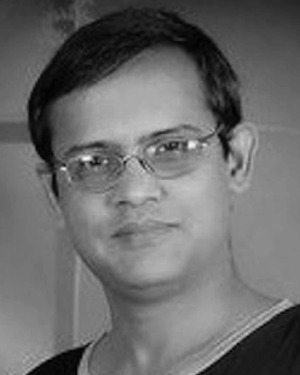}}]%
{Shankar Prakriya} (SM'02) received the B.E. degree (Hons) in electronics and communication engineering from the Regional Engineering College, Bharathidasan University, Tiruchirappalli, in 1987, and the M.A.Sc. (Engg.) and Ph.D. degrees from the Department of Electrical and Computer Engineering, University of Toronto, Toronto, ON, Canada, in 1993 and 1997, respectively. He was with the
Indian Space Research Organization for about three years. 

In 1997, he joined Indian Institute of Technology Delhi (IITD), where he is currently a Professor with the Department of Electrical Engineering. He was the Jai Gupta Research Chair Professor for five years until 2017. He holds three U.S. and some Indian patents. He has served in the technical program committee of the prominent IEEE international conferences. His research interests include in wireless energy harvesting, interference management, non-orthogonal multiple access, and full-duplex communication.
\end{IEEEbiography}
\begin{IEEEbiography}[{\includegraphics[width=1in,height=1.25in,clip,keepaspectratio]{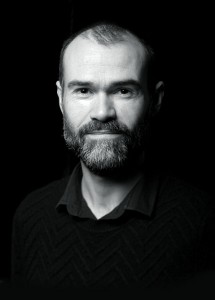}}]%
{Mark F. Flanagan} (M'03--SM'10) received the B.E. and Ph.D. degrees in electronic engineering from University College Dublin, Ireland, in 1998 and 2005, respectively.

From 1998 to 1999, he was a Project Engineer with Parthus Technologies Ltd. From 2006 to 2008, he was a Post-Doctoral Research Fellow with the University of Zurich, Switzerland, the University of Bologna, Italy, and The University of Edinburgh, U.K. In 2008, he was appointed as an SFI Stokes Lecturer in electronic engineering with University College Dublin, where he is currently an Associate Professor. In 2014, he was a Visiting Senior Scientist with the Institute of Communications and Navigation, German Aerospace Center, under a DLR-DAAD Fellowship. His research interests include information theory, wireless communications, and signal processing.

Dr. Flanagan is a Senior Member of the IEEE Communications Society and the IEEE Signal Processing Society. He has served on the technical program committees of several IEEE international conferences, and he was TPC co-chair for the Communication Theory Symposium at ICC 2020. He is currently serving as an Executive Editor for \textsc{IEEE Communications Letters}, and as an Associate Editor for \textsc{Frontiers in Communications and Networks}.
\end{IEEEbiography}
\end{document}